\newtheorem{theorem}{Theorem}[section]
\newtheorem{proposition}[theorem]{Proposition}
\newtheorem{corollary}[theorem]{Corollary}
\newtheorem{lemma}[theorem]{Lemma}
\newcommand{\remove}[1]{}
\newcommand{\F}{\vspace*{\smallskipamount}}
\newcommand{\FF}{\vspace*{\medskipamount}}
\newcommand{\FFF}{\vspace*{\bigskipamount}}
\newcommand{\T}{\hspace*{2em}}
\newcommand{\TT}{\hspace*{4em}}
\newcommand{\TTT}{\hspace*{6em}}
\newcommand{\diff}[2]{\frac{\partial #1}{\partial #2}}
\newcommand{\bra}[1]{\left(#1\right)}
\renewcommand{\Pr}[1]{\Prsymbol\left[#1\right]}
\newcommand{\Exp}[1]{\Expsymbol\left[#1\right]}
\newcommand{\card}[1]{\left|#1\right|}
\newcommand{\set}[1]{\left\{#1\right\}}
\newcommand{\Prsymbol}{\mathop\mathsf{Pr}}
\newcommand{\Expsymbol}{\mathop\mathsf{Exp}}
\renewcommand{\a}{\mbox{$\cal A$}}
\newcommand{\g}{\mbox{$\cal G$}}
\newcommand{\fa}{\mbox{{\scriptsize $\cal A$}}}
\newcommand{\fg}{\mbox{{\scriptsize $\cal G$}}}
\begin{document}

\title{Scheduling Dags under Uncertainty\thanks{Preliminary version of this work appeared in the Proceedings of the 17th ACM Symposium on Parallelism in Algorithms and Architectures (SPAA'05). Research performed in part during a visit to the Division of Mathematics and Computer Science, Argonne National Laboratory, Argonne, IL 60439 supported by NSF grant ITR-800864, and a stay with the Department of Computer Science, University of Alabama, Tuscaloosa, AL 35487.}
\author{
    Grzegorz Malewicz\\
    \small Department of Engineering\\
    \small Google, Inc.\\
    \small Mountain View, CA 94043, USA\\
    \small malewicz@google.com\\
}
}

\maketitle

\begin{abstract}
This paper introduces a parallel scheduling problem where a directed acyclic graph modeling $t$ tasks and their dependencies needs to be executed on $n$ unreliable workers. Worker $i$ executes task $j$ correctly with probability $p_{i,j}$.
The goal is to find a regimen $\Sigma$, that dictates how workers get assigned to tasks (possibly in parallel and redundantly) throughout execution, so as to minimize the expected completion time. 
This fundamental parallel scheduling problem arises in grid computing and project management fields, and has several applications.

We show a polynomial time algorithm for the problem restricted to the case when dag width is at most a constant and the number of workers is also at most a constant. These two restrictions may appear to be too severe. However, they are fundamentally required. Specifically, we demonstrate that the problem is NP-hard with constant number of workers when dag width can grow, and is also NP-hard with constant dag width when the number of workers can grow. When both dag width and the number of workers are unconstrained, then the problem is inapproximable within factor less than $5/4$, unless P=NP.
\end{abstract}

{\bf Keywords:} algorithms and theory, combinatorial optimization, grid computing, Markov chains, probabilistic failures, project management, resource constraints, scheduling of dags.

\section{Introduction}

Grid computing infrastructures have been developed over the past several years to enable execution of computations on shared distributed resources~\cite{FosterK04}. The machines, disks and network often operate at a slower pace or stop operating, due to hardware and software failures and sharing. Nevertheless, there is significant demand to perform scientific computations with complex task dependencies on grids (cf. \cite{Ann02,Ber+03,Inspiral,H+05}). Among the most important remaining challenges is to determine how to quickly execute large-scale, sophisticated computations using unreliable resources. When a task fails to get computed correctly, then the progress of execution may be delayed because dependent tasks cannot be executed pending successful execution of the task.
It is conceivable that task dependencies and resource reliabilities play a significant role in the ability to execute computations quickly. Therefore, one would like to determine relationships among these factors, and develop algorithms for quick execution of complex computations using unreliable resources.

A similar problem arises when managing projects~\cite{HL01} such as production planning or software development. Here a collection of activities and precedence constraints are given. Workers can be assigned to perform the activities. In practice, a worker assigned to an activity may fail to perform it. For example, if an activity consists of writing a piece of code and testing it, it could happen that the test fails. The manager of the project may be able to estimate the success probability of a worker assigned to an activity based on prior experience with the worker. The manager may be able to redundantly assign workers to an activity. For example, two workers may independently write a piece of code and test it; if at least one test succeeds, the activity is completed. Thus the manager faces a problem of how to assign workers to activities, possibly in parallel and redundantly, over the course of the project, so as to minimize the total time of conducting the project.



These two application areas motivate the study of the following fundamental parallel computing scheduling problem. A directed acyclic graph (dag) is given representing $t$ {\em tasks} and their dependencies. There are $n$ {\em workers}. At any given unit of time workers are assigned in some way to the tasks that are ``eligible'' based on precedence constraints and tasks executed thus far (any worker is assigned to at most one task at a time; more than one task may get assigned; more than one worker can be assigned to a task; workers can idle). The workers then {\em attempt} to execute the assigned tasks. The attempt of worker $i$ to execute task $j$ {\em succeeds} with probability $0\leq p_{i,j} \leq 1$. In the next unit of time workers are again assigned to tasks. The execution proceeds in this manner until the $t$ tasks have been executed. The goal is to determine a regimen $\Sigma$, that dictates how workers get assigned to tasks throughout execution, that minimizes the expected completion time.


\F
\subsection{Contributions}
In this paper we propose and investigate a parallel scheduling problem of executing dags using unreliable workers. Our specific contributions are as follows:
\begin{itemize}
\item[(i)] We introduce a new combinatorial optimization problem: given a dag $\g$ with $t$ tasks and $n$ workers such that $p_{i,j}$ is the probability that worker $i$ executes task $j$ correctly, find a regimen $\Sigma$ that minimizes the expected completion time.
\end{itemize}

We show that the ability to solve this problem in polynomial time depends in a crucial way on two natural parameters of the problem: the number of workers and the {\em width} of the dag $\g$. The latter parameter denotes the maximum cardinality of an {\em antichain} (all technical terms are defined in Section~\ref{s.def}).

\begin{itemize}
\item[(ii)] We develop a polynomial time algorithm that finds an optimal regimen for a restricted version of the problem. Specifically, we assume that the width of the dag $\g$ is at most a constant, {\em and also} that the number of workers is at most a constant. Note that our algorithm allows the dag to have complex structure and dag size to grow, however the dag must be ``narrow''. The algorithm uses a dynamic programming approach. Given dag $\g$, we construct a directed graph $\a$ called {\em admissible evolution of execution} that contains all possible sets of tasks that a regimen can have executed at a point of time, and how one set could result form another within one unit of time. This graph $\a$ turns out to be a dag. We then formulate a recurrence to define a regimen for a node of $\a$ based on definitions for the children of the node. 

\item[(iii)] We sharply contrast our algorithmic results with complexity lower bounds, by demonstrating that our restrictions are fundamentally necessary. Specifically, we show that the optimization problem is NP-hard when dag width is constant while the number of workers can grow, and is also NP-hard when the number of workers is constant while dag width can grow. Moreover, if both dag width and the number of workers are unconstrained, then we show that the problem cannot be approximated with a factor less than $5/4$, unless P=NP.
\end{itemize}

Our reductions demonstrate that the complexity of the problem is quite steep. 
First, the problem is trivial when the dag has just one task. Simply then an optimal regimen assigns all workers to that task. However, the problem becomes NP-hard when dag has just two independent tasks! (Arguably this is the ``second simplest'' dag.) 
Second, the problem is easy when there is just one worker. Simply then an optimal regimen follows any topological sort and assigns the (only) worker to the task at hand. However, the problem becomes NP-hard when there are just two workers! 

\F\F
\subsection{Related work}
There are several studies that deal with scheduling under constrained resources~\cite{OU95,NR01}. Many such problems are NP-hard~\cite{CK04}, and the studies typically focus on heuristic approaches. Our work differs from the studies because we also consider resource failures.
More closely related are results~\cite{MT97} on scheduling where each task can be executed in one of a few ways. Each way has certain resource requirements, duration, and failure probability. After execution failure, the task is reexecuted fault-free. The goal is to minimize the expected completion time. Several heuristics are proposed. A similar model is studied~\cite{TMY95} with the goal of maximizing the probability of successful completion using heuristics.

In project scheduling under uncertainty~\cite{HL05,FAP98a,FAP98b} each task has a duration that is a random variable and a requirement on the amount of different resources needed to perform the task. There is a fixed amount of each resource available. The goal is to determine how to assign resources to tasks over the execution of the project to minimize the expected completion time. Our problem is similar because workers can be treated as a resource. However, our problem is different because we allow multiple workers to be assigned to a task, which modifies the distribution of task duration.
A model is studied~\cite{TN03} where the execution time of each task in a dag is a random variable with a certain distribution that depends on the amount of resources assigned to the task. Execution may fail if it takes longer than a threshold. The goal is to maximize the probability of completion. Several heuristics have been proposed.

One of the goals of stochastic scheduling is to minimize the expected completion time when task durations are random variables and tasks need to be scheduled on parallel machines. Such problem was studied for independent tasks~\cite{KRT00,GI99} and for dependent tasks~\cite{SU01}, however, under a different setting, because we assume that two or more machines may be assigned to the same task, which may modify the probability distribution of task duration. 
Other related problems are the Network Reliability Problem and the Network Survivability Problem~\cite{GJ79} that model failures probabilistically, but have different optimization goals.

\remove{
Several interesting results on robust scheduling are related to our work (cf.~\cite{AMSK04,STW98,KY96}). Here the goal is to derive schedules whose performance according a given metric is controlled when the input differs from a specific input in a controlled manner.
}

Computing tasks over the Internet poses distinct challenges.
It is known~\cite{GaoM04} which dependent tasks should be executed by a reliable server and which by workers whose unreliability is modeled probabilistically, so as to maximize the expected number of correct tasks, under a constraint on completion time.
A similar probabilistic model is studied~\cite{Sar02} for independent tasks.
Also related is the problem of scheduling tasks so as to render tasks eligible for allocation to workers (hence for execution) at the maximum possible rate~\cite{Rosenberg04,RosenbergY04,MRY05,MalewiczR04,MFRW07,CMR07}.

\remove{
Pebble Games have been used to model scheduling problems on dags~\cite{PatersonH70,Cook74,HopcroftPV77,HongK81,RosenbergS83}. In this model we are given a supply of pebbles and certain rules that restrict how pebbles can be placed on the dag's nodes. The goal is to determine how to place the pebbles on the nodes (i.e., how to ``play the game'') to meet a certain performance goal. 
}

There are several systems in grid computing and project management fields that are related to our work. Condor~\cite{TTL05}, for example, executes computations with complex task dependencies. The clients to which tasks are sent are commonly unreliable. Condor assigns any task to one computer at a time (no redundancy), and uses a
``FIFO'' topological sort to sequence task submissions. This may sometimes lead to an ineffective use of computing resources.
One example of a project management system is the Microsoft Project 2003~\cite{MsftProj03}. The system can estimate project duration using a PERT algorithm based on a probabilistic model of duration of a task, but does not take into account resource constraints.

In a companion to this paper the author explores~\cite{Mal05b} implementation details and experiments with the algorithm. Specifically, a range of theoretical and practical approaches are proposed to craft an efficient implementation of the algorithm. The benefits of the approaches and scalability of the implementation are evaluated experimentally.

Approximation algorithms for several cases of the problem studied in this paper were recently given~\cite{LR07,L07}. The case of independent tasks was shown to admit an $O(\log t)$-approximation. Richer dependency structures also admit approximations. For disjoint chains, an
$O(\log n \log t \log(n+t)/\log\log(n+t))$-approximation was given. A collection of directed out- or in-trees admits an $O(\log n \log^2t)$-approximation, while a directed forest admits an $O(\log n \log^2 t \log(n+t)/\log\log(n+t))$-approximation.

\subsection{Paper organization}
The remainder of paper is structured as follows. In Section~\ref{s.def}, we define a model of executing dags where workers can fail with certain probabilities, and formulate an optimization problem of minimizing the expected completion time. In Section~\ref{s.alg}, we give a polynomial time algorithm for the problem with dags that have constant width and where the number of workers is constant. Finally, in Section~\ref{s.complexity}, we explain why restricting dag width and the number of workers is fundamentally required.

\section{Model and background}

\label{s.def}

A {\it directed graph} $\g$ is given by a set of {\it nodes} $N_{\fg}$
and a set of {\it arcs} (or, {\it directed edges}) $A_{\fg}$, each
having the form $(u \rightarrow v)$, where $u, v \in N_{\fg}$.  A {\it
path} in $\g$ is a sequence of arcs that share adjacent endpoints, as
in the following path from node $u_1$ to node $u_k$: $(u_1 \rightarrow
u_2), \ (u_2 \rightarrow u_3), \ \ldots, \ (u_{k-2} \rightarrow
u_{k-1}), \ (u_{k-1} \rightarrow u_k)$.  A {\it dag} ({\it directed
acyclic graph}) $\g$ is a directed graph that has no cycles; i.e., in
a dag, no path of the preceding form has $u_1 = u_k$.
Given an arc $(u \rightarrow v) \in A_{\fg}$, $u$ is a {\em parent} of
$v$, and $v$ is a {\em child} of $u$ in $\g$.  Each parentless node of
$\g$ is a {\em source (node)}, and each childless node is a {\em sink
(node)}; all other nodes are {\em internal}.

Given a dag, an {\em antichain} is a set of its nodes such that no two are ``comparable'' i.e., for any two distinct nodes $u$ and $v$ from the set, there is no path from $u$ to $v$ nor from $v$ to $u$. The largest cardinality of an antichain is called {\em width} of the dag.
A {\em chain} is a path. A set of chains is said to {\em cover} the dag if every node of the dag is a node in at least one of the chains (chains may ``overlap''). A Dilworth's Theorem~\cite{Dil50} states that dag width is equal to the minimum number of chains that cover the dag.

A computation is modeled by a dag. Then the nodes are called {\em tasks} and for concreteness we assume that $N_{\fg}=\set{1,\ldots,t}$. We denote a set $\set{1,\ldots,t}$ by $[t]$. Arcs specify dependencies among tasks: given an arc $(u \rightarrow v)$, $v$ cannot be executed until $u$ has been. A set of tasks {\em satisfies precedence constraints} if, for every task in the set, every parents of the task is also in the set. Given such set $X$, we denote by $E(X)$ the set of tasks not in $X$ every of whose parent is in $X$; tasks in this set are called {\em eligible} when tasks $X$ have been executed. (So any source not in $X$ is eligible.)

The execution of tasks is modeled by the following game.
There are $n$ {\em workers} identified with elements of $[n]$. Let $X$ be a set of tasks that satisfies precedence constraints. The game starts with $Y=X$, and proceeds in {\em rounds}. During a round, workers are assigned to tasks in $E(Y)$ according to a regimen $\Sigma$. The regimen specifies an assignment $\Sigma(Y)$ that maps each worker to an element of the set $E(Y)\cup\set{\perp}$ i.e., either to a task that is eligible in this round, or to a distinguished element $\perp$. Note that the assignment is determined by the set of tasks $Y$. The assignment enables directing multiple workers to the same task, or to different tasks; workers can also idle. 
Then each worker that was assigned to a task {\em attempts} to execute the task.
The attempt of worker $i$ assigned to task $j$ {\em succeeds} with probability $0\leq p_{i,j} \leq 1$ independently of any other attempts. We assume that there is at least one worker that has non-zero probability of success, for any given task. A task is {\em executed} in this round if, and only if, at least one worker assigned to the task has succeeded. Every executed task is added to $Y$, and the game proceeds to the next round. It could be the case that every attempt has failed; then the set $Y$ remains unchanged, and in the next round worker assignment remains unchanged, too.
Formally, a regimen $\Sigma$ is a function $\Sigma:2^{N_{\fg}}\rightarrow ([n] \rightarrow (N_{\fg}\cup\set{\perp}))$, such that for any subset $Z$ of tasks that satisfies precedence constraints, the value $\Sigma(Z)$ is a function from $[n]$ to the set $E(Z)\cup\set{\perp}$.
The game proceeds until a round when every sink of $\g$ is in $Y$. We say that the game {\em ends} in such round.

The quality of the game is determined by how quickly the game ends. Specifically, the number of rounds of the game, beyond the first round, until the round when the game ends is called {\em time to completion} of regimen $\Sigma$ starting with tasks $X$ already executed. This time is a random variable. When $X$ is empty, we call the time simply {\em completion time}.
Our goal is to find a regimen $\Sigma_{OPT}$ that minimizes the expected completion time. We call this goal the Recomputation and Overbooking allowed Probabilistic dAg Scheduling Problem (ROPAS).

\FF
\noindent
{\bf ROPAS}\\
{\it Instance:} A dag $\g$ describing dependencies among $t$ tasks, $n$ workers such that worker $i$ succeeds in executing task $j$ with probability $0\leq p_{i,j}\leq 1$, and that for any task $j$ there is worker $i$ with $p_{i,j}>0$.\\
{\it Objective:} Find a regimen $\Sigma_{OPT}$ that minimizes the expected completion time.
\FF

We observe that the optimization problem yields a finite expectation.
Let $X$ be a subset of tasks that satisfies precedence constraints. Denote by $B_X$ the minimum expected time to completion across regimens that start with tasks $X$ already executed. 
The subsequent lemma states that $B_X$ is finite. 
The proof, while simple, is presented here in detail to help the reader gain familiarity with our notation. 

\begin{lemma}
\label{l.finite}
For any set $X$ of tasks that satisfies precedence constraints, $B_X$ is finite.
\end{lemma}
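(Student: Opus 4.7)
\medskip

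\noindent\textbf{Proof proposal.} Since $B_X$ is defined as an infimum of expected completion times over all regimens, it suffices to exhibit a single regimen $\Sigma$ starting from $X$ whose expected time to completion is finite, and the bound $B_X \le \Exp{\text{time to completion of }\Sigma}$ will then give the conclusion. The plan is to construct a simple ``sequential'' regimen that processes one remaining task at a time in topological order and always assigns at least one worker that has a strictly positive success probability on that task.

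More concretely, let $u_1,u_2,\ldots,u_k$ be a topological ordering of $[t]\setminus X$ in $\g$. For each $j\in[t]\setminus X$, let $w(j)\in[n]$ be any worker with $p_{w(j),j}>0$; such a worker exists by the ROPAS instance assumption. Define
$$
p_{\min} \;=\; \min_{j\in[t]\setminus X} p_{w(j),j} \;>\; 0.
$$
Now define $\Sigma$ on every subset $Y\subseteq N_{\fg}$ satisfying precedence constraints with $X\subseteq Y\subsetneq[t]$ as follows: let $i^*(Y)$ be the smallest index such that $u_{i^*(Y)}\notin Y$, and let $\Sigma(Y)$ assign worker $w(u_{i^*(Y)})$ to the task $u_{i^*(Y)}$ and assign $\perp$ to all other workers. (On subsets that are unreachable from $X$ under $\Sigma$, extend $\Sigma$ arbitrarily so it is a valid regimen.) By the topological ordering, every parent of $u_{i^*(Y)}$ lies in $X\cup\{u_1,\ldots,u_{i^*(Y)-1}\}\subseteq Y$, so $u_{i^*(Y)}$ is eligible and $\Sigma$ is well defined.

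The key observation is that under this regimen, at most one task becomes executed per round, and in any round the targeted task is executed with probability at least $p_{\min}$. Consequently, the number of rounds needed to execute $u_i$, conditioned on reaching the state where $u_i$ is targeted, is dominated by a geometric random variable with success probability $p_{\min}$, which has expectation $1/p_{\min}$. By linearity of expectation across the $k\le t$ targeted tasks,
$$
\Exp{\text{time to completion of }\Sigma\text{ starting with }X} \;\le\; \frac{k}{p_{\min}} \;\le\; \frac{t}{p_{\min}} \;<\; \infty.
$$
Therefore $B_X\le t/p_{\min}$, which is finite.

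The only subtle point is making sure that the constructed object satisfies the formal definition of a regimen (a function on all of $2^{N_{\fg}}$ respecting the eligibility constraint on precedence-closed inputs), but this is handled by the explicit construction above together with the arbitrary extension to unreachable sets. The rest of the argument is a one-line geometric-mean-time bound, so no serious obstacle arises.
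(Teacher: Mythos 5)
Your proposal is correct and follows essentially the same route as the paper's proof: process the unexecuted tasks one at a time in topological order, observe that each task's execution time is geometric with success probability bounded away from zero, and sum expectations by linearity. The only (immaterial) difference is that you assign a single worker with positive success probability to the current task, whereas the paper assigns all $n$ workers, yielding the bound $\sum_{j}1/(1-\prod_{i=1}^n(1-p_{i,j}))$ instead of your $t/p_{\min}$.
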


\remove{**********************
\begin{proof}[sketch]
We take a topological sort, and assign all workers to successive unexecuted tasks. This yields a finite expectation, so $B_X$ must be finite, too.
\end{proof}
**********************}

\begin{proof}
We define a regimen that has finite expectation; thus minimum expectation must be finite, too. We take a topological sort $t_1,\ldots,t_m$ of the subdag of $\g$ induced by tasks not in $X$. Note that it is possible to execute tasks in the order of the sort because once tasks in $X$ and $t_1,\ldots,t_{j-1}$ have been executed, task $t_j$ is eligible. Our regimen will follow this order, each time assigning every worker to the task at hand. The probability that every worker fails to execute task $t_j$ is $\prod_{i=1}^n (1-p_{i,t_j})$, which by assumption is strictly smaller than $1$. Thus the expected time to execute the task is $1/(1-\prod_{i=1}^n (1-p_{i,t_j}))$, because execution time follows geometric distribution. We can use linearity of expectation to conclude that the expected time to completion for the regimen is just the sum of expectations for each individual task in the sort. Thus $B_X$ is at most $\sum_{j=1}^t1/(1-\prod_{i=1}^n (1-p_{i,j}))$, as desired. 
\end{proof}

\section{Scheduling algorithm for a restricted problem}
\label{s.alg}

This section presents a polynomial time algorithm that finds an optimal regimen for a restricted version of the ROPAS Problem. Specifically, we make two assumptions: (1) the dag $\g$ has at most a constant width, and (2) the number of workers is at most a constant. These two restrictions may appear to be quite severe, and perhaps too restrictive! However, they are fundamentally necessary, as we shall see in Section~\ref{s.complexity}.

We construct a directed graph that models how computation can evolve. The graph $\a=(N_{\fa},A_{\fa})$ called {\em admissible evolution of execution} for $\g$ (an example is in Figure~\ref{f.aec}) is constructed inductively. Each node of $\a$ will be a subset of nodes of $\g$. We begin with a set $ N_{\fa}=\set{\emptyset}$. For any node $X \in N_{\fa}$ that does not contain every sink of $\g$, we calculate the set of eligible tasks $E(X)$ in $\g$. We then take the non-empty subsets $D \subseteq E(X)$, add to $N_{\fa}$ a node $X\cup D$, if it is not there already, and add to $A_{\fa}$ an arc $(X,X\cup D)$, if it is not there already. Since $\g$ is finite, the inductive process clearly defines a unique directed graph $\a$. The structure of this graph is explained in the subsequent lemma.

\begin{lemma}
\label{l.dag}
Let $\a$ be the admissible evolution of execution for a dag $\g$. Then $\a$ is dag. Its nodes are exactly the sets of tasks that satisfy precedence constraints. It has a single source $\emptyset$ and a single sink $N_{\fg}$.
\end{lemma}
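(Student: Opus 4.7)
The plan is to prove the three assertions in the natural order: acyclicity, then the characterization of $N_\fa$, then the uniqueness of source and sink.

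First, I would show $\a$ is a dag by observing that every arc strictly increases cardinality. By construction, an arc $(X,X\cup D)$ has $D$ a nonempty subset of $E(X)$, and elements of $E(X)$ are by definition outside $X$, so $|X\cup D|>|X|$. Hence along any directed path cardinality strictly increases, ruling out cycles.

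Next, for the characterization of $N_\fa$, I would prove two containments. The easy direction is that every $X\in N_\fa$ satisfies precedence constraints, by induction on the inductive construction: $\emptyset$ trivially does, and if $X$ does and $D\subseteq E(X)$, then every parent of a task $v\in X\cup D$ lies in $X\cup D$ because either $v\in X$ (apply the hypothesis) or $v\in D\subseteq E(X)$ (all parents of $v$ lie in $X$ by definition of $E$). The harder direction is the converse: every set $Y$ satisfying precedence constraints appears in $N_\fa$. I would prove this by induction on $|Y|$. The base $Y=\emptyset$ is immediate. For the step, pick any $v\in Y$ that is a sink of the subdag of $\g$ induced by $Y$, and set $Y'=Y\setminus\{v\}$. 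Then $Y'$ satisfies precedence constraints (parents of any $u\in Y'$ lie in $Y$ by hypothesis, and none can equal $v$ since $v$ is a sink in $\g[Y]$), and $v\in E(Y')$. The main subtlety—and I expect this to be the trickiest step—is verifying that $Y'$ does not contain every sink of $\g$, so that the construction indeed fires at $Y'$ and adds $Y$. If $v$ itself is a sink of $\g$, this is immediate since $v\notin Y'$. Otherwise $v$ has a child $w$ in $\g$, and $w\notin Y$ because $v$ is a sink of $\g[Y]$; following children from $w$ yields a sink $s$ of $\g$ that is a descendant of $w$, and an elementary path-backtracking argument using precedence shows $s\in Y$ would force $w\in Y$, a contradiction—so $s\notin Y'$. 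Applying the inductive hypothesis to $Y'$ gives $Y'\in N_\fa$, and then the construction rule applied to $D=\{v\}\subseteq E(Y')$ places $Y$ into $N_\fa$.

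Finally, for the source and sink claims I would reuse the construction in the previous step. Any $X\in N_\fa$ with $X\neq\emptyset$ satisfies precedence constraints, so by the argument above there is a predecessor $X'=X\setminus\{v\}\in N_\fa$ with $(X',X)\in A_\fa$; hence $\emptyset$ is the unique source. For the sink, observe that $N_\fg$ trivially contains every sink of $\g$, so the construction adds no arcs leaving it; conversely, if $X\in N_\fa$ and $X\neq N_\fg$, pick any $u\in N_\fg\setminus X$ and walk down from $u$ in $\g$ to a sink $s$; the same path-backtracking argument shows that if $X$ contained every sink of $\g$ (so $s\in X$) then precedence along the path from $u$ to $s$ would force $u\in X$, a contradiction. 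Therefore $E(X)\neq\emptyset$, the construction adds at least one outgoing arc from $X$, and $N_\fg$ is the unique sink.
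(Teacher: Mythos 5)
Your proposal is correct and follows essentially the same route as the paper: the same cardinality argument for acyclicity, the same two containments for the node characterization (your downward induction that strips a maximal element of $\g[Y]$ is just the mirror image of the paper's buildup along a topological sort of $Y$), and the same reasoning for the unique source and sink; your extra care in checking that the construction actually fires at $Y'$ (i.e., that $Y'$ misses some sink of $\g$) is a detail the paper leaves implicit. The one spot to tighten is the final step of the sink argument: your downward walk to a sink $s\notin X$ only shows that the construction is invoked at $X$, and to conclude $E(X)\neq\emptyset$ you still need the complementary upward walk (or the paper's topological-sort observation) from some $u\in N_{\fg}\setminus X$ to an ancestor all of whose parents lie in $X$ --- an easy fix with the same path-backtracking tool you already use.
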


\begin{proof}
We verify the assertions in turn.
The graph $\a$ cannot have any cycle, because any arc points from a node $X$ to a node $X\cup D$, but $X\cup D$ has larger cardinality than the set $X$.

Nodes of $\a$ are exactly the sets of tasks that satisfy precedence constraints. Indeed, if $X$ satisfies precedence constraints, then clearly so does its union with any subset of $E(X)$. Thus any node of $\a$ satisfies precedence constraints. Now pick any subset $Y$ of tasks that satisfies precedence constraints. Let $t_1,\ldots,t_{\card{Y}}$ be a topological sort of the subdag of $\g$ induced by $Y$. Clearly, $t_j$ belongs to the set of tasks eligible when tasks $t_1,\ldots,t_{j-1}$ have been executed, for any $j$. So if $\set{ t_1,\ldots,t_{j-1}}$ is a node of $\a$, so is $\set{t_1,\ldots,t_{j}}$. Since $\emptyset$ is a node of $\a$, $Y$ must also be. A corollary to this is that $N_{\fg}$ is a node of $\a$.

We add a node $Y$ to $N_{\fa}$ only if there is an arc leading to $Y$ from some other node. So $Y$ cannot be a source. However, $\emptyset$ is a source because no arc leads to a set with the same number or fewer elements.

Pick any node $X$ of $\a$ and suppose that it does not contain the sinks of $\g$. By looking at a topological sort of $\g$ we notice that there is a task of $\g$ not in $X$ such that every parent in $\g$ of the task is in $X$. Thus $E(X)$ is not empty, and so $X$ has a child in $\a$. Pick any node $X$ of $\a$ that contains every sink of $\g$. Since $X$ satisfies precedence constraints, it contains every task, and so $X=N_{\fg}$. But $E(N_{\fg})$ is empty, so $X$ has no children in $\a$.
\end{proof}

Given a regimen $\Sigma$, we can convert $\a$ into a Markov chain that models how execution can evolve for this particular regimen. Specifically, for any node $X$ of $\a$, $\Sigma$ defines the assignment of workers to tasks from $E(X)$, thus yielding transition probabilities form $X$ back to $X$ and to the children of $X$ in $\a$. The expected time to completion is then just the expected hitting time of the sink of $\a$. We can use Markov chain theory to relate expected times to completion for the nodes of $\a$ as follows.

\begin{theorem}[\cite{Nor97}]
\label{t.recursive}
Consider a regimen $\Sigma$ and a set $X$ of tasks that satisfies precedence constraints and does not contain all sinks of $\g$.
Let $D_0,D_1,\ldots,D_k$ be the distinct subsets of $E(X)$, and $D_0=\emptyset$.
Let $a_i$ be the probability that $D_i$ is exactly the set of tasks executed by workers in the assignment $\Sigma(X)$. Let $X_i=X\cup D_i$. Then
$$
T_{X_0} = \frac{1}{a_1+\ldots+a_k}\bra{1+\sum_{i=1}^k a_i \cdot T_{X_i}}
~,
$$
where $T_{X_i}$ is the expected time to completion for regimen $\Sigma$ starting with tasks $X_i$ already executed, and by convention $1/0 = \infty$ and $0 \cdot \infty = 0$.
\end{theorem}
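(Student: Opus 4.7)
The plan is to carry out a standard one-step (first-round) analysis of the induced Markov chain. First I would observe that since $D_0=\emptyset$ we have $X_0=X\cup\emptyset=X$, so $T_{X_0}=T_X$. Starting from $X$, the regimen fixes the assignment $\Sigma(X)$ of workers to tasks in $E(X)$; each assigned worker succeeds independently, and the (random) set $S\subseteq E(X)$ of tasks executed during the round takes value $D_i$ with probability $a_i$ by the definition of $a_i$. In particular $\sum_{i=0}^{k} a_i = 1$, since $D_0,\ldots,D_k$ enumerate all subsets of $E(X)$.

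Next I would invoke the Markov property of the execution: because $\Sigma(Y)$ depends on the current set $Y$ of executed tasks alone, conditioned on $S=D_i$ the process starting from round two behaves exactly like a fresh execution of $\Sigma$ begun with $X_i=X\cup D_i$ already completed. Therefore, conditional on $S=D_i$, the expected number of additional rounds is $T_{X_i}$. Applying the law of total expectation to the decomposition ``one round now plus whatever remains'' gives
$$
T_X \;=\; 1 + \sum_{i=0}^{k} a_i\, T_{X_i} \;=\; 1 + a_0\, T_X + \sum_{i=1}^{k} a_i\, T_{X_i}.
$$
Moving the $a_0 T_X$ term to the left, using $1-a_0=\sum_{i=1}^{k}a_i$, and dividing yields exactly the claimed identity.

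The one genuinely delicate point is handling degeneracy, which is the reason the theorem states the conventions $1/0=\infty$ and $0\cdot\infty=0$. If $a_1+\cdots+a_k=0$, then $\Sigma(X)$ fails in every round almost surely, the chain never leaves $X$, and $T_X=\infty$; the convention $1/0=\infty$ on the right-hand side matches. If $a_i>0$ for some $i$ but $T_{X_i}=\infty$, then both sides are $\infty$ again, while any term with $a_j=0$ and $T_{X_j}=\infty$ contributes $0$ by convention, so no spurious $\infty-\infty$ arises. The rearrangement above is therefore valid in the extended reals, and the identity holds in every case. I would not expect any real obstacle beyond bookkeeping these conventions, since the substance is first-step analysis supplied by the cited Markov chain reference.
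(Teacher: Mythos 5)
Your proof is correct. Note that the paper itself does not actually prove this statement in the text---it is presented as a standard fact about hitting times of the induced Markov chain, with a citation to Norris---so your write-up supplies an argument where the paper supplies none; the first-step analysis you give is exactly the right tool, and you correctly identify that the only delicate point is the extended-real bookkeeping. One remark on the route: your derivation writes $T_X = 1 + a_0 T_X + \sum_{i\ge 1} a_i T_{X_i}$ and then subtracts $a_0 T_X$, which is an $\infty - \infty$ manipulation whenever $T_X = \infty$ and $a_0 > 0$; you patch this with a case analysis on degeneracy, and that patch is sound (in the remaining case $a_1+\cdots+a_k>0$ with all relevant $T_{X_i}$ finite, the exit time from $X$ is geometric and $T_X$ is finite, so the algebra is legitimate). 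An alternative decomposition, which is the one implicit in the paper's framing and present in an appendix of its source, avoids the self-referential equation entirely: write the time to completion as $T_L + T_R$, where $T_L$ is the number of rounds until the first round in which some task of $E(X)$ is executed---geometric with success probability $a_1+\cdots+a_k$, hence $\mathsf{Exp}[T_L] = 1/(a_1+\cdots+a_k)$---and $T_R$ is the remaining time, computed by conditioning on which \emph{non-empty} set $D_i$ was executed in that first successful round, an event of probability $a_i/(a_1+\cdots+a_k)$. Summing gives the stated identity directly, with no subtraction and hence no $\infty-\infty$ concern. Both arguments are standard first-transition analyses; yours is slightly more economical, the other is slightly more robust. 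No gap either way.
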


Our goal, however, is not to compute the expected completion time for a given regimen, but rather find a regimen that minimizes the expectation. For this purpose, we give a dynamic programming algorithm called OPT that defines a regimen called $\Sigma_{OPT}$. Since $\a$ has no cycles, we can hope to apply the recurrence of Theorem~\ref{t.recursive} starting from the sink and backtracking towards the source. Specifically, we initialize the regimen arbitrarily. Then we take a topological sort $Y_1,\ldots,Y_m$ of $\a$ and process it in the reverse order of the sort. See Figure~\ref{f.pseudocode} for a pseudocode. When we process a node $X$ of $\a$, we define two values: a number $S_X$ and an assignment $\Sigma_{OPT}(X)$.
We begin by setting $S_{Y_m}$ to $0$ and $\Sigma_{OPT}(Y_m)$ so that each worker is assigned to $\perp$. 
Now let $1\leq h <m$, and let us discuss how $S_{Y_h}$ and $\Sigma_{OPT}(Y_h)$ are defined. Let $D_0,\ldots,D_k$ be the distinct subsets of $E(Y_h)$, such that $D_0=\emptyset$. We consider the distinct $\card{E(Y_h)}^n$ assignments of the $n$ workers to the tasks of $E(Y_h)$, but not to $\perp$. For any assignment, we calculate the probability $a_i$ that $D_i$ is exactly the set of tasks executed by workers in the assignment. If $a_1+\ldots+a_k>0$, then we compute the weighted sum $1/(a_1+\ldots+a_k)\cdot(1+\sum_{i=1}^ka_iS_{Y_h \cup D_i} )$. We pick an assignment that minimizes the sum. We set $S_{Y_h}$ to the minimum, and $\Sigma_{OPT} (Y_h)$ to the assignment that achieves the minimum. Then we move back in the topological sort to process another node, by decreasing $h$. After the entire sort has been processed, the regimen $\Sigma_{OPT}$ has been determined.

\begin{figure}[!t]
\footnotesize 

Data structure: $T$ is a dictionary that maps nodes of $\a$ to distinct floating point variables\\

\begin{tabular}{ll}
  \begin{tabular}{l}
$OPT(t, \g ,n ,(p_{i,j}))$ \\
{\tt01}~ let $Y_1,\ldots,Y_m$ be a topological sort of $\a$\\
{\tt02}~ $S_{Y_m} = 0$\\
{\tt03}~ for $h=m-1$ downto $1$ do \\
{\tt04}~\T $min = \infty$\\
{\tt05}~\T for all assignments $A$ of workers to $E(Y_h)$\\
{\tt06}~\TT let $I\subseteq E(Y_h)$ be the assigned tasks\\
{\tt07}~\TT $sum =0$\\
  \end{tabular}
&
  \begin{tabular}{l}
{\tt08}~\TT for all nonempty subsets $D\subseteq I$\\
{\tt09}~\TTT let $a=\Pr{\text{$A$ executes exactly $D$}}$ \\
{\tt10}~\TTT $sum = sum + a\cdot S_{Y_h\cup D}$\\
{\tt11}~\TT let $a=\Pr{\text{ every assigned worker fails }}$ \\
{\tt12}~\TT if $(1+sum)/(1-a) < min$\\
{\tt13}~\TTT $min =  (1+sum)/(1-a)$\\
{\tt14}~\TTT $\Sigma_{OPT}(Y_h) = A$\\
{\tt15}~\T $S_{Y_h} = min$\\
  \end{tabular}
\end{tabular}

\caption{An algorithm for constructing an optimal regimen $\Sigma_{OPT}$, for a dag $\g$ describing dependencies among $t$ tasks, and $n$ workers such that worker $i$ executes task $j$ successfully with probability $p_{i,j}$.}
\label{f.pseudocode}
\end{figure}

Our goal is twofold. First, we show that the regimen $\Sigma_{OPT}$ indeed minimizes the expected completion time. We do this by considering an extended algorithm OPT$^\star$ where workers can idle. This freedom matches that defined in ROPAS and we can use the recurrence to demonstrate that OPT$^\star$ solves ROPAS. We then argue that idling is not needed, because an optimum can be found among regimens where every worker always gets assigned to a task. We, hence, prove that $OPT$ solves ROPSA, too. We conclude by showing that OPT runs in polynomial time when dag width and the number of workers are at most a constant.

It is convenient to consider an extended version OPT$^\star$ of the algorithm where workers are allowed to idle. Specifically, when processing $Y_h$, we consider the $(\card{E(Y_h)}+1)^n$ distinct assignments of the $n$ workers to the tasks of $E(Y_h)$ or to $\perp$. The regimen resulting from processing the topological sort is denoted by $\Sigma_{OPT^\star}$. The following proposition explains why the extended algorithm finds best regimen. (Note a ``strong'' sequence of quantifiers---a {\em single} regimen that is optimal for all $X$ of $\a$ when competing with any regimen that start with $X$.)

\begin{proposition}
\label{p.dp}
When the algorithm OPT$^\star$ halts, thus computing a regimen $\Sigma_{OPT^\star}$, for any node $X$ of $\a$, the expected time to completion for the regimen starting with tasks $X$ already executed is equal to the minimum expected time to completion $B_X$ that any regimen can achieve starting with tasks $X$ already executed.
\end{proposition}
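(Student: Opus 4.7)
The plan is to prove the claim by backward induction along the topological sort $Y_1,\ldots,Y_m$ of $\a$, strengthened so that it asserts both an equality of values and that $\Sigma_{OPT^\star}$ realizes the optimum from every node. Concretely, I will show, by downward induction on $h$ from $m$ to $1$, that (a) $S_{Y_h} = B_{Y_h}$, and (b) the expected time to completion of the regimen $\Sigma_{OPT^\star}$ starting with $Y_h$ already executed equals $S_{Y_h}$. Once this joint statement is proved, the proposition follows immediately since by Lemma~\ref{l.dag} the nodes of $\a$ are exactly the sets of tasks satisfying precedence constraints.

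The base case $h=m$ is straightforward: again by Lemma~\ref{l.dag} the unique sink $Y_m = N_{\fg}$ contains every sink of $\g$, so the game is already over, $B_{Y_m} = 0 = S_{Y_m}$, and $\Sigma_{OPT^\star}(Y_m)$ (which idles every worker) trivially achieves this. For the inductive step, fix $h<m$ and assume (a) and (b) hold for every $k>h$. Because every arc of $\a$ points from a strictly smaller set to a strictly larger one, every child $Y_h \cup D$ of $Y_h$ appears later in the topological sort, so the inductive hypothesis applies to it. Now consider any regimen $\Sigma'$ starting with $Y_h$ already executed; let $D_0=\emptyset,D_1,\ldots,D_k$ enumerate the subsets of $E(Y_h)$ and let $a_i$ be the probability that $\Sigma'(Y_h)$ executes exactly $D_i$. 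Theorem~\ref{t.recursive} gives
$$
T_{Y_h}^{\Sigma'} \;=\; \frac{1}{a_1+\cdots+a_k}\Bigl(1+\sum_{i=1}^k a_i\, T_{Y_h\cup D_i}^{\Sigma'}\Bigr),
$$
and by the inductive hypothesis $T_{Y_h\cup D_i}^{\Sigma'} \ge B_{Y_h\cup D_i} = S_{Y_h\cup D_i}$. Substituting this lower bound and noting that OPT$^\star$ selects, by construction, the assignment that minimizes this expression over all assignments (with the convention $1/0 = \infty$), we obtain $T_{Y_h}^{\Sigma'} \ge S_{Y_h}$, and since $\Sigma'$ was arbitrary, $B_{Y_h} \ge S_{Y_h}$. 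Conversely, applying Theorem~\ref{t.recursive} to $\Sigma_{OPT^\star}$ itself, and using part (b) of the inductive hypothesis at each child, yields $T_{Y_h}^{\Sigma_{OPT^\star}} = S_{Y_h}$ by the very definition of $S_{Y_h}$ and $\Sigma_{OPT^\star}(Y_h)$; hence $B_{Y_h}\le S_{Y_h}$, giving equality and establishing both (a) and (b) for $Y_h$.

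The main delicate point I expect is the handling of the $1/0=\infty$ convention together with the fact that $\Sigma'$ may even idle workers while OPT$^\star$ (the non-extended version) may not. For OPT$^\star$ the argument above goes through directly because the assignments it ranges over include idling, so every assignment a competing regimen could use at $Y_h$ is among those compared. An assignment whose success probabilities sum to $0$ yields $\infty$ and never attains the minimum, which is finite by Lemma~\ref{l.finite}; so such assignments can be safely discarded from the comparison. The only other thing to check is that the recurrence of Theorem~\ref{t.recursive} is legitimately applied to $\Sigma'$ at $Y_h$ and at each child $Y_h\cup D_i$; this is fine because $\Sigma'$ is a function of the current set of executed tasks, so its behavior starting from $Y_h\cup D_i$ is a well-defined regimen in its own right. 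With these points in place, the induction closes and the proposition follows.
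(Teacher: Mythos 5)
Your proof is correct and follows essentially the same route as the paper: a backward induction along the topological sort of $\a$ maintaining the joint invariant that $S_{Y_i}=B_{Y_i}$ and that $\Sigma_{OPT^\star}$ attains this value from $Y_i$, with Theorem~\ref{t.recursive} supplying the recurrence in both directions (your only variation is bounding an arbitrary competing regimen $\Sigma'$ rather than one that attains $B_X$, which is a harmless and arguably cleaner choice). Note one slip of terminology: OPT$^\star$ \emph{is} the extended version that permits idling, so your parenthetical ``(the non-extended version)'' should be dropped, though the surrounding reasoning already treats it correctly.
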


\remove{**********************
\begin{proof}[sketch]
We show that as we process the topological sort, we maintain an invariant that for all sets $Z$ subsequent to $Y$ in the sort, regimen $\Sigma_{OPT}$ achieves $B_Z$ when starting with $Z$. When processing a set $Y$, an optimal regimen starting with $Y$ will have, by Theorem~\ref{t.recursive}, expectation expressed as expectations for sets subsequent in the sort. By the invariant they must be at least the corresponding expectations for $\Sigma_{OPT}$. Hence $\Sigma_{OPT}(Y)$ can be selected optimally.
\end{proof}
**********************}

\begin{proof}
Let $Y_1,\ldots,Y_m$ be the topological sort of $\a$ which was used by the algorithm OPT$^\star$. We argue that the following invariant parameterized by $h$ holds during the reverse processing of the sort:
\begin{quote}
For all $i$ such that $h\leq i\leq m$, $S_{Y_i} = B_{Y_i}$ and $S_{Y_i}$ is equal to the expected time to completion of $\Sigma_{OPT^\star}$ that starts with tasks $Y_i$ already executed.
\end{quote}

This invariant is clearly true for $h=m$. Indeed, by Lemma~\ref{l.dag}, $\a$ has just one sink which is equal to $N_{\fg}$, and so $Y_m = N_{\fg}$. If every task is already executed, the minimum expected time to completion is zero. The algorithm, thus, correctly assigns zero to $S_{Y_m}$. The expectation of any regimen that starts with every task already executed is zero, and so the value of $\Sigma_{OPT^\star}(Y_m)$ is satisfactory.

Now pick any $2\leq h\leq m$. We shall see that after OPT$^\star$ has processed $Y_{h-1}$, the invariant is true for $h$ decreased by one. Let $X=Y_{h-1}$. Since $X$ is not a sink of $\a$, it has at least one child. Let $X_1,\ldots,X_k$ be the children.

We first argue that the value assigned to $S_{X}$ is at most the minimum expected time to completion $B_X$.
Pick a regimen $\Sigma$ that achieves the minimum when starting with tasks $X$ already executed. Some workers may get assigned to $\perp$ in $\Sigma(X)$. Lemma~\ref{l.finite} ensures that $B_X$ is finite. We can use Theorem~\ref{t.recursive} to express the expectation of the regimen as a weighted sum 
$$
B_X 
=
 1/(a_1+\ldots+a_k)\cdot(1+\sum_{i=1}^ka_iT_{X_i} )
$$
of expected times $T_{X_i}$ to completion for $\Sigma$ starting with tasks from the sets $X_i$ already executed. By the invariant $S_{X_i}=B_{X_i}$, and so the expectation $T_{X_i}$ is at least $S_{X_i}$. The dynamic program considers every assignment as a candidate for $\Sigma_{OPT^\star}(X)$, and the assignment $\Sigma(X)$ in particular. For this assignment the algorithm calculates the weighted sum $1/(a_1+\ldots+a_k)\cdot(1+\sum_{i=1}^ka_iS_{X_i} )$. Hence the weighted sum is at most $B_X$. The algorithm selects an assignment that minimizes the weighted sum, so the value of $S_X$ is at most $B_X$, as desired.

After values of $S_X$ and $\Sigma_{OPT^\star}(X)$ have been selected, the value of $S_X$ is clearly equal to the expectation of $\Sigma_{OPT^\star}$ starting with $X$ already executed. Indeed, for any assignment considered by the algorithm, the weighted sum is, by Theorem~\ref{t.recursive}, equal to the expected time to completion of $\Sigma_{OPT^\star}$ that starts with tasks $X$ already executed and uses this assignment in place of $\Sigma_{OPT^\star}(X)$.

Since $\Sigma_{OPT^\star}$ has expectation $S_X$ and $S_X$ is at most $B_X$, the expected time to completion for $\Sigma_{OPT^\star}$ starting with $X$ already executed is actually equal to $B_X$. Hence the invariant holds when $h$ gets decreased by one. This completes the proof.
\end{proof}

We will use this result to show that it is not necessary for workers to idle. We begin by observing that, roughly speaking, ``more done, less time remaining to complete'', as detailed in the subsequent lemma.

\begin{lemma}
\label{l.monotone}
For any subsets $X\subseteq Y$ of tasks that satisfy precedence constraints, $B_X \geq B_Y$.
\end{lemma}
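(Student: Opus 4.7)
The plan is to reduce to the atomic case $Y = X \cup \{j\}$ for some $j \in E(X)$ via induction on $|Y\setminus X|$, and then handle that atomic case by a coupling argument.

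For the reduction, the base $Y = X$ is trivial. Otherwise I would take $j$ to be the first task in any topological sort of the subdag of $\g$ induced by $Y \setminus X$. All parents of $j$ lie in $Y$ because $Y$ satisfies precedence constraints, and none lies in $Y \setminus X$ because $j$ is first in the sort, so all parents of $j$ lie in $X$. Hence $j \in E(X)$ and $X \cup \{j\}$ satisfies precedence constraints. Since $|Y \setminus (X \cup \{j\})| < |Y \setminus X|$, induction gives $B_{X \cup \{j\}} \geq B_Y$, and the lemma reduces to proving $B_X \geq B_{X \cup \{j\}}$.

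For the atomic case, let $\Sigma^\star := \Sigma_{OPT^\star}$, which by Proposition~\ref{p.dp} attains $B_Z$ from every state $Z$. I would run a ``shadow'' process from $X$ driven by $\Sigma^\star$ and, in parallel, a ``real'' process from $X \cup \{j\}$ driven by the following history-dependent policy. In round $t$, let $W_t$ and $W'_t$ denote the shadow and real states; for each worker $i$, if $k := \Sigma^\star(W_t)(i)$ is not already in $W'_t$ then send $i$ to $k$ in the real process as well, and otherwise idle $i$. Couple the Bernoulli outcomes so that any worker--task pair appearing in both processes in the same round uses a single shared Bernoulli. A one-line induction on $t$ maintains the invariant $W'_t \supseteq W_t$: writing $D_t$ for the round-$t$ progress of the shadow, the real process executes exactly $D_t \setminus W'_t$ by the coupling, so $W'_{t+1} = W'_t \cup (D_t \setminus W'_t) = W'_t \cup D_t \supseteq W_t \cup D_t = W_{t+1}$. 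Thus the real process reaches $N_{\fg}$ no later than the shadow, giving $E[T_{\text{real}}] \leq E[T_{\text{shadow}}] = B_X$.

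Finally, the real policy constructed above is history-dependent, whereas $B_{X\cup\{j\}}$ is defined as a minimum over regimens. I would invoke the standard fact from finite-state Markov-decision theory that the infimum of the expected total round count over all (possibly history-dependent, possibly randomized) policies is attained by a deterministic stationary policy. Alternatively one can replay the proof of Proposition~\ref{p.dp} with history-dependent policies in place of regimens, since that argument only uses the one-round recurrence of Theorem~\ref{t.recursive}. Either way $B_{X \cup \{j\}} \leq E[T_{\text{real}}] \leq B_X$, closing the induction. The main obstacle is keeping the coupling clean when the shadow sends a worker to a task already present in $W'_t$: these ``wasted'' workers must be idled in the real process, and one must confirm this does not break the inclusion invariant, which the identity $W'_{t+1} = W'_t \cup D_t$ settles at once. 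A secondary concern is the appeal to Markov-decision theory in the last step, which can be replaced by the direct extension of Proposition~\ref{p.dp} indicated above.
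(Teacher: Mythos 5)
Your proof is correct, but it takes a genuinely different route from the paper's in the atomic step. The reduction to the case $Y=X\cup\{u\}$ with $u\in E(X)$ is common to both. From there the paper argues algebraically: it applies the recurrence of Theorem~\ref{t.recursive} to $\Sigma_{OPT^\star}$ at $X$, runs a reverse induction on $|X|$, and after a case analysis (no worker assigned to $u$; only $u$ assigned; the general case) uses the inductive hypothesis to replace $B_{X\cup C}$ by $B_{X\cup C\cup\{u\}}$, pairwise combines terms, and recognizes the resulting expression as the expected time of an explicit one-step-deviation \emph{regimen} $\Sigma''$ started at $X\cup\{u\}$ (mimic $\Sigma_{OPT^\star}(X)$ with the workers on $u$ idled), so that $B_{X\cup\{u\}}\leq T''$ needs no further justification. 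Your coupling replaces all of that with a single sample-path domination: the invariant $W'_t\supseteq W_t$ is exactly right (eligibility of the copied assignments in the real process follows from $W_t\subseteq W'_t$, and the identity $W'_{t+1}=W'_t\cup D_t$ handles the ``wasted'' workers), and it removes both the induction and the case analysis from the atomic step. What you pay for this is the final step: your real policy is history-dependent, whereas $B_{X\cup\{j\}}$ is a minimum over the paper's memoryless regimens, so you must import the standard stochastic-shortest-path fact that history-dependent randomized policies cannot improve on stationary deterministic ones. That appeal is legitimate, but your fallback of ``replaying Proposition~\ref{p.dp}'' is not quite verbatim: Theorem~\ref{t.recursive} exploits the fact that a regimen repeats the same assignment while the state is unchanged (making the waiting time geometric), which fails for history-dependent policies, so one would instead need a single-round conditioning argument. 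The paper's proof sidesteps this entirely because every auxiliary policy it constructs is itself a regimen.
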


\begin{proof}
It is sufficient to prove the lemma with an additional restriction on $Y$. Specifically, the proof considers any $X$ as stated, any $u\in E(X)$, and $Y=X\cup\set{u}$. The proof is by the reverse induction on the cardinality of $X$. The theorem is obvious when $\card{X}=t-1$, because then $B_Y=0$.

For the inductive step, take $X$ of cardinality at most $t-2$. Consider the regimen $\Sigma_{OPT^\star}$. By Proposition~\ref{p.dp} the expected time to completion when $\Sigma_{OPT^\star}$ starts with tasks $Z$ already executed is $B_Z$, for any $Z$ that satisfies precedence constraints. Let $t_1,\ldots,t_k$ be the tasks to which $\Sigma_{OPT^\star}(X)$ assigns workers. Since $B_X$ is finite, $k\geq 1$. Let $p_i$ be the probability that at least one worker assigned to task $t_i$ succeeds. 
Let $p_C=\prod_{i\in C} p_i\prod_{i\in [k]\setminus C} (1-p_i)$ be the probability that exactly the set of tasks indexed by $C$ gets executed in $\Sigma_{OPT^\star}(X)$ among tasks indexed by $[k]$. We can use Theorem~\ref{t.recursive} to express the expected time to completion for $\Sigma_{OPT^\star}$ as
\begin{equation}
\label{eq.B_X}
B_X = \frac{1+\sum_{\emptyset\subset C \subseteq[k]}p_C\cdot B_{X\cup C}}{1-p_{\emptyset}}
\end{equation}
with $p_{\emptyset} < 1$.

We begin with a few special cases that are simpler to analyze and that simplify the follow-up analysis. 
Suppose first that $\Sigma_{OPT^\star}(X)$ does not assign any worker to $u$. Then each assigned task is still eligible when tasks $X\cup\set{u}$ have been executed. Hence we can define a new regimen $\Sigma'$ equal to $\Sigma_{OPT^\star}$ except that $\Sigma'(X\cup\set{u}) = \Sigma_{OPT^\star}(X)$. But then the expected time to completion for $\Sigma'$ starting with tasks $X\cup\set{u}$ already executed is
$$
T'_{X\cup\set{u}} = \frac{1+\sum_{\emptyset\subset C \subseteq[k]}p_C\cdot B_{X\cup C\cup\set{u}}}{1-p_{\emptyset}}
~.
$$
Clearly, by the inductive assumption, $B_{X\cup C} \geq B_{X\cup C \cup\set{u}}$. So $B_X\geq T'_{X\cup\set{u}}$. Since $ T'_{X\cup\set{u}} \geq B_{X\cup\set{u}}$, we have that $B_X \geq B_{X\cup\set{u}}$, as desired. 
Hence in the remainder we can assume that $t_k=u$.
Suppose now that $k=1$. Then $B_X$ is equal to $1/(1-p_{\emptyset})$ plus $B_{X\cup\set{t_k}}$ (because $p_{\emptyset}$ cannot be $1$), and so clearly the latter is less than $B_X$. A similar conclusion follows when $k\geq 2$ but $0=p_1=\ldots=p_{k-1}$. 

After having tackled the simpler special cases, we proceed to the final, most interesting, case when $k\geq 2$, $t_k=u$, and at least one of $p_1,\ldots,p_{k-1}$ is strictly greater than $0$. 
The Equation~(\ref{eq.B_X}) for $B_X$ contains probabilities of events, and it is convenient to partition the events into three groups:
\begin{itemize}
\item[(i)] only task $t_k$ gets executed, which occurs with probability $p_k\cdot (1-p_1)\cdot\ldots\cdot(1-p_{k-1})$,
\item[(ii)] precisely the tasks indexed by a non-empty set $C \subseteq [k-1]$ get executed and also task $t_k$, which occurs with probability $p_k\cdot\prod_{i\in C} p_i\prod_{i\in [k-1]\setminus C} (1-p_i)$,
\item[(iii)] precisely the tasks indexed by a non-empty set $C \subseteq [k-1]$ get executed but not task $t_k$, which occurs with probability $(1-p_k)\cdot\prod_{i\in C} p_i\prod_{i\in [k-1]\setminus C} (1-p_i)$.
\end{itemize}

The Equation~(\ref{eq.B_X}) for $B_X$ has a summation where a probability from the second group is multiplied by $B_{X\cup C\cup\set{t_k}}$, while a probability from the third group by $B_{X\cup C}$. By the inductive assumption, $B_{X\cup C\cup\set{t_k}} \leq B_{X\cup C}$, since $C$ is not empty. Hence the latter can be replaced by the former, and the resulting expression will not increase. But then we can pairwise combine each summand of group (ii) with the corresponding summand of group (iii), and obtain 
$$
B_X \geq \frac{1+p_kq_{\emptyset}\cdot B_{X\cup\set{t_k}} + \sum_{\emptyset\subset C\subseteq [k-1]} q_C\cdot B_{X\cup C\cup\set{t_k}}}{1-(1-p_k)q_{\emptyset}}
~,
$$
where $q_C=\prod_{i\in C} p_i\prod_{i\in [k-1]\setminus C} (1-p_i)$ is the probability that exactly the set of tasks indexed by $C$ gets executed among tasks from $[k-1]$. Let us denote the big summation by $S$. It is now enough to show that 
$$
\frac{1+p_kq_{\emptyset}\cdot B_{X\cup\set{t_k}}
 + S}{1-(1-p_k)q_{\emptyset}}
\geq 
B_{X\cup\set{t_k}}
~.
$$
In order to verify this inequality, we transform it to an equivalent form by multiplying both sides by the denominator and grouping for $B_{X\cup\set{t_k}}$
$$
\frac{1+S}{1-q_{\emptyset}}
\geq 
B_{X\cup\set{t_k}}
~,
$$
because $q_{\emptyset}<1$.
We notice that the left hand side is the expected time to completion of a regimen $\Sigma''$ that starts with tasks $X\cup\set{t_k}$ already executed. This regimen $\Sigma''$ is equal to $\Sigma_{OPT^\star}$ except that when tasks $X\cup\set{t_k}$ need to be executed, the regimen assigns workers to tasks the way $\Sigma_{OPT^\star}(X)$ does, omitting these workers assigned to $t_k$ (such workers do not get assigned then at all). Since $B_{X\cup\set{t_k}}$ is the minimum expectation, it must indeed be smaller or equal to the expectation of that regimen $\Sigma''$.

This completes the inductive step and the proof.
\end{proof}

We use the lemma to argue that when processing $Y_h$, it is sufficient to consider assignments $S$ that map every worker to an eligible task, i.e., we do not need to assign any worker to $\perp$.

\begin{proposition}
For any set $X$ of tasks that satisfies precedence constraints and does not contain every sink of $\g$, there is a regimen $\Sigma$ such that $\Sigma(X)$ assigns every worker to a task of $E(X)$ and the expected time to completion of $\Sigma$ that starts with tasks $X$ already executed is $B_X$ (i.e., minimum).
\end{proposition}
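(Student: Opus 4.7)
The plan is to start from the regimen $\Sigma_{OPT^\star}$ of Proposition~\ref{p.dp}, which achieves $B_Y$ from every admissible $Y$, and then reassign, one at a time, each worker that is currently mapped to $\perp$ at $X$ to some fixed task $u \in E(X)$ (which exists by Lemma~\ref{l.dag}, since $X$ is not the sink of $\a$), showing that each reassignment preserves the expected completion time from $X$.

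Let $\Sigma$ be the current regimen, agreeing with $\Sigma_{OPT^\star}$ on every set other than $X$, and let $\tilde\Sigma$ be obtained from $\Sigma$ by taking one worker $w$ that $\Sigma(X)$ maps to $\perp$ and redirecting it to $u$ (with $\tilde\Sigma = \Sigma$ on all other sets). Because $\tilde\Sigma$ deviates from $\Sigma_{OPT^\star}$ only at $X$, after round one it follows $\Sigma_{OPT^\star}$, so from any reached state $Z$ the expected completion time is $B_Z$. Theorem~\ref{t.recursive} then yields
\[
T^{\tilde\Sigma}_X = \frac{1 + \sum_{\emptyset \ne D'} p'_{D'}\, B_{X \cup D'}}{1 - p'_\emptyset},
\]
where $p'_{D'}$ is the probability that $\tilde\Sigma(X)$ executes exactly $D'$. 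Since $B_X$ is a minimum, $T^{\tilde\Sigma}_X \ge B_X$ is automatic; the work is to establish the matching upper bound $T^{\tilde\Sigma}_X \le B_X$.

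For the upper bound I would couple the random outcomes so that every worker other than $w$ produces the same success/failure under $\Sigma(X)$ and $\tilde\Sigma(X)$. If $D$ denotes the set executed under $\Sigma(X)$, then under $\tilde\Sigma(X)$ the set executed is $D$ when $w$ fails on $u$ (probability $1-p_{w,u}$) and $D \cup \{u\}$ when $w$ succeeds, independently of $D$. This gives $p'_\emptyset = p_\emptyset\,(1-p_{w,u})$ and explicit split formulas for each $p'_{D'}$ according to whether $u \in D'$. Substituting these into the recurrence, grouping the numerator sum by whether $u \in D'$, and invoking Lemma~\ref{l.monotone} to bound $B_{X \cup D \cup \{u\}} \le B_{X \cup D}$ for $D \ne \emptyset$ and $B_{X \cup \{u\}} \le B_X$, a direct algebraic simplification collapses the numerator to $B_X\,(1 - p'_\emptyset)$, so $T^{\tilde\Sigma}_X \le B_X$ and thus equality. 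Iterating this single reassignment over every idle worker at $X$ terminates in at most $n$ steps and produces the desired regimen.

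The main obstacle is the algebraic verification, particularly in the subcase where $u$ is already assigned to some other worker by $\Sigma(X)$. In that subcase the probabilities $p'_{D'}$ with $u \in D'$ aggregate both the ``$u$ executed by the old workers'' and ``$u$ executed via $w$'' contributions, and one must combine the split sums with the monotonicity inequality so that the cross terms involving $p_{w,u}$ and $p_\emptyset$ cancel cleanly and reveal the factor $1 - p'_\emptyset$ on the right-hand side.
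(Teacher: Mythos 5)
Your proposal is correct and follows essentially the same route as the paper: start from $\Sigma_{OPT^\star}$ of Proposition~\ref{p.dp}, reassign idle workers at $X$ to an eligible task, express the new expectation via Theorem~\ref{t.recursive} by conditioning on whether the reassigned worker(s) succeed, and invoke Lemma~\ref{l.monotone} to collapse the numerator; the only differences are cosmetic (you move one idle worker at a time to an arbitrary $u\in E(X)$, whereas the paper moves all of them at once onto an already-assigned task $t_1$), and the subcase you worry about at the end is handled uniformly by your coupling, since $p'_{D'}=p_{D'}+p_{w,u}\,p_{D'\setminus\{u\}}$ for $u\in D'$ holds whether or not $u$ was already assigned.
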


\begin{proof}
We shall construct the regimen $\Sigma$ using the regimen $\Sigma_{OPT^\star}$ of Proposition~\ref{p.dp}. We will show that the expectation of $\Sigma$ is at most that of $\Sigma_{OPT^\star}$, and since the latter is minimum, the expectations will have to be equal. 

Let us first make a few observations about $\Sigma_{OPT^\star}$. Let $t_1,\ldots,t_k$ be the tasks to which $\Sigma_{OPT^\star}(X)$ assigns workers. We know that $k\geq 1$. If every worker has been assigned by $\Sigma_{OPT^\star}(X)$ to a task, then the claim follows. We, therefore, assume that there is at least one worker not assigned by $\Sigma_{OPT^\star}(X)$ to any task. 

We construct a regimen $\Sigma$. Let $U$ be the set of workers not assigned by $\Sigma_{OPT^\star}(X)$ to any task. We define $\Sigma$ to be equal to $\Sigma_{OPT^\star}$ except that the assignment $\Sigma(X)$ also assigns every worker from $U$ to task $t_1$. Let $q$ be the probability that at least one worker from $U$ succeeds when assigned to $t_1$, $0\leq q\leq 1$. We shall argue that the expected time to completion for $\Sigma$ starting with tasks $X$ already executed, $T_X$, is at most $B_X$.

Let us compare equations for $B_X$ and $T_X$. As earlier, let $p_C$ be the probability that exactly the tasks indexed by $C$ get executed. We can express $B_X$ using Theorem~\ref{t.recursive} as
$$
B_X = \frac{1+\sum_{\emptyset \subset C \subseteq [k]}p_C\cdot B_{X\cup C}}{1-p_{\emptyset}}
~,
$$
and we know that $p_{\emptyset}<1$.
Similarly, we can express $T_X$. We notice that for any $C$, $\emptyset \subseteq C \subseteq [k]$, the event that exactly the tasks in $C$ are executed in $\Sigma_{OPT^\star}(X)$ can be decomposed into two mutually exclusive ``subevents'': at least one worker from $U$ succeeds, and every worker from $U$ fails. Hence
$$
T_X = \frac{1+q\cdot p_{\emptyset}\cdot B_{X\cup\set{t_1}}+ \sum_{\emptyset \subset C \subseteq [k]}q\cdot p_C\cdot B_{X\cup C\cup{\set{t_1}}}+\sum_{\emptyset \subset C \subseteq [k]}(1-q)\cdot p_C\cdot B_{X\cup C}}{1-(1-q)\cdot p_{\emptyset}}
~,
$$
even if $t_1 \in C$ for some $C$.

We make two simplifying assumptions. Notice that if $q=0$, then workers of $U$ always fail, and so $T_X$ is then obviously equal to $B_X$. Notice also that if $p_{\emptyset}=0$ then there must be a task that is certainly executed. We can then reassign workers of $U$ to that task instead of to $t_1$, and the resulting $T_X$ will also be equal to $B_X$. Therefore, we can assume without loss of generality that $q>0$ and $p_{\emptyset}>0$.

We will modify the expression for $T_X$ to make it look more like $B_X$. By Lemma~\ref{l.monotone}, we know that $B_{X\cup C\cup\set{t_1}} \leq B_{X\cup C}$. Hence we can replace the former with the latter in the expression for $T_X$, pairwise combine terms, and obtain a bound
\begin{equation}
\label{eq.T_X}
T_X \leq \frac{1+q\cdot p_{\emptyset}\cdot B_{X\cup\set{t_1}}+ \sum_{\emptyset \subset C \subseteq [k]} p_C\cdot B_{X\cup C}}{1-(1-q)\cdot p_{\emptyset}}
~,
\end{equation}
that begins to resemble the expression for $B_X$. 

We shall conclude that the bound in Equation~(\ref{eq.T_X}) is at most $B_X$, which will mean that $T_X\leq B_X$, as desired, because then of course $T_X=B_X$. Let $S$ denote the summation in the enumerator. We restate our goal as that of showing that
$$
\frac{1+q\cdot p_{\emptyset}\cdot B_{X\cup\set{t_1}}+ S}{1-(1-q)\cdot p_{\emptyset}}
\leq
B_X
=
\frac{1+S}{1-p_{\emptyset}}
~.
$$
But this inequality can be equivalently expressed as
$$
B_{X\cup\set{t_1}} \leq \frac{B_X\bra{1-p_{\emptyset}+q\cdot p_{\emptyset}}-1-S}{q\cdot p_{\emptyset}}
= \frac{1+S}{1-p_{\emptyset}}\bra{\frac{1-p_{\emptyset}}{q\cdot p_{\emptyset}}+1}- \frac{1+S}{q\cdot p_{\emptyset}}= \frac{1+S}{1-p_{\emptyset}} = B_X
~,
$$
because $q,p_{\emptyset}>0$. However, by Lemma~\ref{l.monotone}, $B_{X\cup\set{t_1}}$ is indeed at most $B_X$, and thus the proof is completed.
\end{proof}

\begin{corollary}
\label{dp.correctness}
The algorithm $OPT$ finds a regimen $\Sigma_{OPT}$ that minimizes the expected completion time.
\end{corollary}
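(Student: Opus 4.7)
The plan is to piggyback on the two preceding results. Proposition~\ref{p.dp} already asserts that the extended algorithm OPT$^\star$---which is free to assign workers to $\perp$---computes values $S_X$ satisfying $S_X = B_X$ for every node $X$ of $\a$. The immediately preceding proposition asserts that for every such $X$ there is an optimal regimen whose outer assignment at $X$ maps every worker into $E(X)$. Taken together, these say that narrowing the enumeration from the $(\card{E(Y_h)}+1)^n$ assignments inspected by OPT$^\star$ down to the $\card{E(Y_h)}^n$ non-idling assignments inspected by OPT does not sacrifice optimality at any node.

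To formalize this I would replay the reverse-induction argument of Proposition~\ref{p.dp}, but now for OPT rather than OPT$^\star$. The invariant will be: after OPT has processed $Y_h,\ldots,Y_m$, one has $S_{Y_i} = B_{Y_i}$ for every $i\ge h$, and the partially-built $\Sigma_{OPT}$ achieves this minimum expectation when starting from $Y_i$. The base case $h=m$ is identical to the one in Proposition~\ref{p.dp}: since $Y_m = N_{\fg}$ by Lemma~\ref{l.dag}, $B_{Y_m}=0$ and no workers need to be assigned, so setting $S_{Y_m}=0$ on line~{\tt02} is correct.

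For the inductive step at $X=Y_{h-1}$, I would verify both inequalities. The direction $S_X \ge B_X$ is immediate: every weighted sum compared on line~{\tt12} corresponds, via Theorem~\ref{t.recursive}, to the expected completion time of a genuine regimen starting at $X$, and no regimen can beat $B_X$. For the opposite direction $S_X \le B_X$, I would invoke the preceding proposition to obtain an optimal regimen $\Sigma^{\star\star}$ whose assignment at $X$ puts every worker on a task in $E(X)$. This particular assignment is one of the candidates enumerated on line~{\tt05} of OPT; combining it via Theorem~\ref{t.recursive} with the inductive hypothesis $S_{X\cup D}=B_{X\cup D}$ for each child yields a weighted sum equal to exactly $B_X$, so the minimum picked on line~{\tt15} satisfies $S_X\le B_X$. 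Specialising the invariant to $h=1$ (where $Y_1=\emptyset$, since $\emptyset$ is the unique source of $\a$) gives $S_{\emptyset}=B_{\emptyset}$ and an optimal $\Sigma_{OPT}$ starting from the empty set, which is precisely the statement of the corollary.

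The only subtle point---essentially the sole place the argument could slip---will be confirming that the non-idling assignment supplied by the preceding proposition is compatible with the dynamic program even though $\Sigma^{\star\star}$ may still use $\perp$ at descendant nodes of $\a$. This is harmless: the induction hypothesis already locks in $S_{X\cup D}=B_{X\cup D}$ at the children, so only the first-round assignment at $X$ enters the weighted sum, and that first-round assignment is non-idling by the choice of $\Sigma^{\star\star}$.
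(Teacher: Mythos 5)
Your proposal is correct and follows essentially the same route as the paper: the corollary is obtained by combining Proposition~\ref{p.dp} with the preceding proposition (that an optimal regimen exists whose assignment at each $X$ is non-idling), so that restricting OPT's enumeration to the $\card{E(Y_h)}^n$ non-idling assignments loses nothing; your replay of the reverse induction merely makes explicit what the paper leaves implicit. The one small imprecision---that the candidate weighted sum for the non-idling optimal assignment is ``exactly $B_X$''---is harmless, since optimality of $\Sigma^{\star\star}$ forces its continuations from reachable children to be optimal, and in any case only the inequality $S_X\leq B_X$ is needed.
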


\begin{figure*}[ht!]
\centerline{\epsfig{file=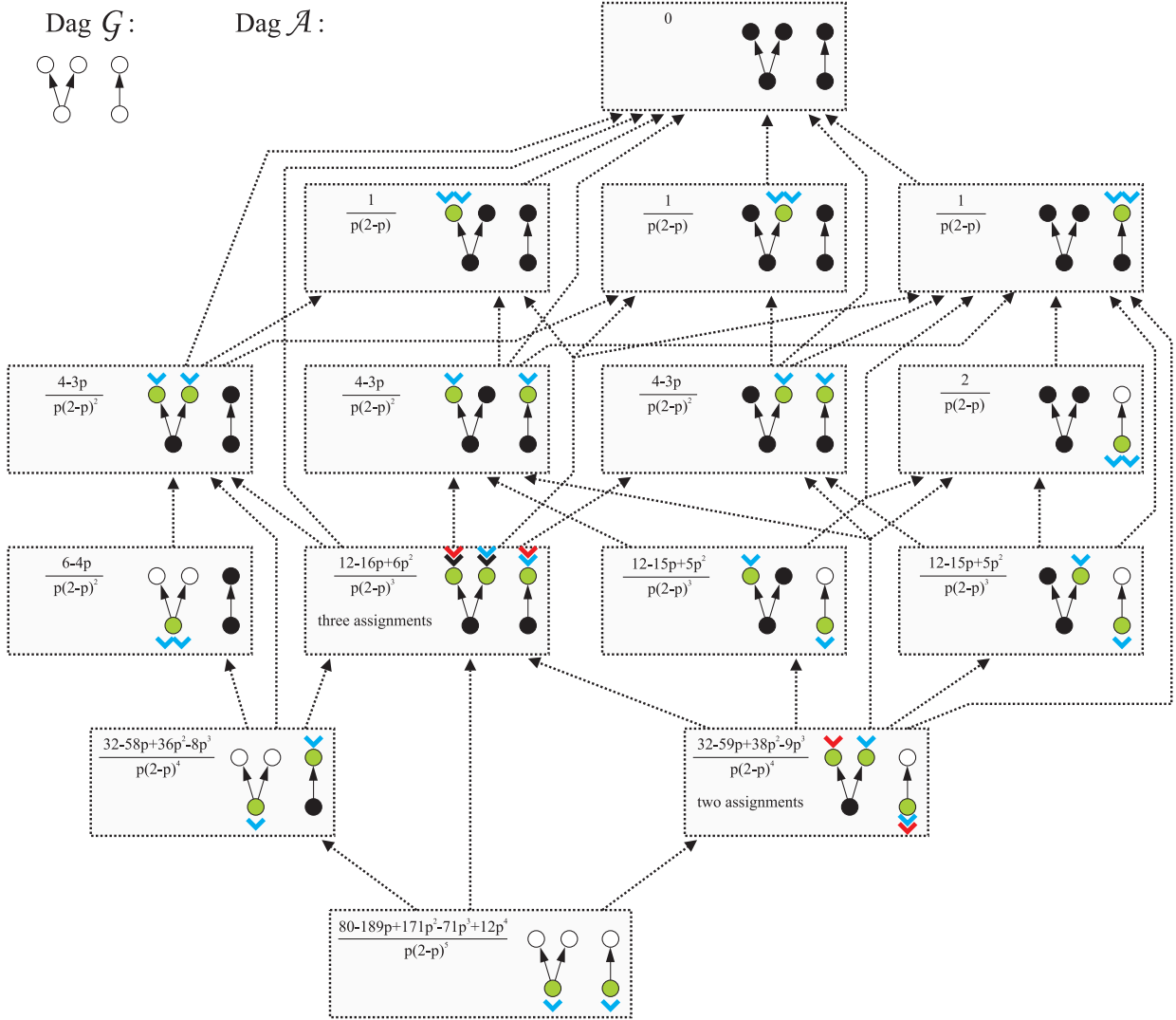,width=6.45in}}
\caption{Depicted is the dag $\a$, called admissible evolution of execution, constructed for the given dag $\g$. The dashed boxes and dashed arrows represent nodes and arcs of $\a$. Depicted are also optimal regimens for the case of two workers, where each has the same probability $0<p<1$ of executing successfully any given task. Each box contains: a subset of executed tasks of $\g$ (black), and tasks that are eligible then (green); the minimum expected time to completion (upper left hand corner); and how the two workers should be assigned to eligible tasks so as to yield the expectation (same-color ``$\vee$'' signs)---every assignment that yields this expectation is shown, except for symmetric assignments where workers could be swapped. Note that in the first round the two workers must be assigned to distinct tasks, if not then the expected completion time is not minimized.
\label{f.aec}}
\end{figure*}

We now tackle the second subgoal of showing that the algorithm runs in polynomial time for dags of at most a constant width and when the number of workers is at most a constant.
Let $w$ be the width of $\g$. By Dilworth's Theorem, the dag $\g$ can be covered by $w$ chains. If any task of the chain is executed, so must be every preceding task in the chain. Each chain can have length at most $t$. Hence, there are at most $(t+1)^{w}$ distinct subsets of tasks that satisfy precedence constraints. 
When the algorithm processes a node of $\a$, it considers assignments of workers to eligible tasks. Any worker can, therefore, be assigned in at most $t$ ways, and so there are at most $t^n$ assignments. We notice that for a given assignment, we actually only need to evaluate probabilities $a_i$ for at most $2^n$ of the sets $D_i$, because the workers can get assigned to at most $n$ different tasks, and so probabilities associated with sets that are not subsets of these tasks must be zero, and so can be omitted from the weighted sum. This yields the main result of this section.

\begin{theorem}
\label{dp.final}
The algorithm OPT solves the ROPAS Problem in polynomial time when dag width is bounded by a constant and the number of workers is also bounded by a constant.
\end{theorem}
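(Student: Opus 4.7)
Correctness is already supplied by Corollary~\ref{dp.correctness}, so the plan is to bound OPT's running time and argue that every elementary operation in the pseudocode is polynomial. The analysis has three ingredients: bounding the number of nodes of $\a$, bounding the per-node work, and checking that the primitive operations (building $\a$, evaluating probabilities, indexing into the table $T$) are themselves polynomial.

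First I would bound the number $m$ of nodes of $\a$. By Dilworth's Theorem, $\g$ is covered by $w$ chains $C_1,\ldots,C_w$. If $X$ satisfies precedence constraints and contains a task $v$ on some $C_i$, then every ancestor of $v$ on $C_i$ is in $X$; hence $X\cap C_i$ is a prefix of $C_i$. Each chain has at most $t+1$ prefixes (including the empty one), so there are at most $(t+1)^w$ precedence-satisfying subsets, and thus $m\le (t+1)^w$ by Lemma~\ref{l.dag}. Using chain-prefix tuples as keys, the dictionary $T$ supports lookups in time polynomial in $w$, and the dag $\a$ together with a topological sort can be constructed by a standard BFS starting from $\emptyset$ in time polynomial in $m$ and $t$.

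Next I would bound the work performed while processing a fixed node $Y_h$. Two eligible tasks cannot be comparable (if $u$ were an ancestor of $v$ and both eligible at $Y_h$, then $u\notin Y_h$ would violate eligibility of $v$), so $E(Y_h)$ is an antichain and $|E(Y_h)|\le w$. The loop on line~05 thus examines at most $w^n$ assignments (the paper's cruder bound $t^n$ also suffices). For any fixed assignment $A$, the image $I$ has $|I|\le n$, so the inner loop on line~08 runs over at most $2^n$ subsets; each probability computed on lines~09 and~11 is a product of at most $n$ per-task factors of the form $1-\prod_i(1-p_{i,j})$ or $\prod_i(1-p_{i,j})$, computable in time polynomial in $n$. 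The table lookup $S_{Y_h\cup D}$ on line~10 is polynomial by the dictionary argument above.

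Multiplying the estimates, the total running time is at most
\[
 (t+1)^w \cdot w^n \cdot 2^n \cdot \mathrm{poly}(n,t),
\]
which is polynomial in $t$ once $w$ and $n$ are constants, establishing the theorem. The only step that is not a direct reading of the pseudocode is the bound $(t+1)^w$ on the number of precedence-satisfying subsets; this is the main obstacle and is handled by Dilworth's Theorem together with the prefix characterization above. Every other ingredient reduces to counting nested loops and checking that arithmetic on the given probabilities $p_{i,j}$ takes polynomial time.
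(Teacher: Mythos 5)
Your proof is correct and follows essentially the same route as the paper: Dilworth's Theorem bounds the nodes of $\a$ by $(t+1)^w$, each node requires examining at most $t^n$ (you sharpen this to $w^n$ via the antichain observation, which is valid but unnecessary) assignments, and each assignment requires at most $2^n$ subset-probability evaluations. Your additional details on dictionary lookups and arithmetic are harmless elaborations of the same argument.
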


An example of an application of the algorithm to a dag is given in Figure~\ref{f.aec}. 

\section{Complexity of scheduling}
\label{s.complexity}

This section shows fundamental reasons why the algorithm presented in the previous section is so restricted. Recall that the algorithm restricts both the number of workers {\em and} dag width to at most a constant. It turns out that ROPAS is NP-hard if the number of workers can grow while dag width is at most a constant, {\em or} if dag width can grow while the number of workers is at most a constant. Therefore, a polynomial time algorithm for the problem must restrict both dag width {\em and} the number of workers, unless $P\neq NP$. We also show that if we allow the number of workers to grow and at the same time dag width to grow, then the problem is inapproximable within factor less than $5/4$, unless P=NP.

Our reductions demonstrate that the complexity of the ROPAS Problem is quite steep. 
First, the problem is trivial when the dag has just one task. Simply then an optimal regimen assigns every worker to that task. However, the problem becomes NP-hard when dag has just two independent tasks! (Arguably this is the ``second simplest'' dag.) Intuitively, the hardness results from the difficulty to perfectly load balance unreliable workers across eligible tasks.
Second, the problem is easy when there is just one worker. Simply then an optimal regimen follows any topological sort and assigns the (only) worker to the task at hand. However, the problem becomes NP-hard when there are just two workers! Intuitively, here hardness is a consequence of possible complex task dependencies which make it difficult to determine how to pick tasks for execution so that sufficiently many tasks become eligible thus preventing workers from compulsory idling.

\subsection{Scheduling narrow dags with many workers is NP-hard}
\label{s.narrow_many}

In this section we show that ROPAS restricted to the dag with two independent tasks, but where the number of workers can grow, is NP-hard. For this purpose we introduce two auxiliary problems and determine their complexities. The first of them is the Multiplicative Partition Problem.

\FF
\noindent
{\bf Multiplicative Partition (MP)}\\
{\it Instance:} Number $n$, integer sizes $s_1,\ldots,s_n\geq 2$.\\
{\it Question:} Is there a set $P$ such that 
$
\prod_{i\in P} s_i = \prod_{i\notin P} s_i
$~?
\FF

It is tempting to try to show NP-completeness of MP by a reduction from the well-known Partition Problem. Here we could hope to use the fact that 
$\sum_{i\in P} r_i = \sum_{i\notin P} r_i$
if, and only if, 
$\prod_{i\in P} 2^{r_i} = \prod_{i\notin P} 2^{r_i}$. Unfortunately, an $s_i=2^{r_i}$ takes space exponential with respect to the space taken by $r_i$. The issue of space explosion can be avoided by carrying out a more subtle reduction from the Exact Cover by 3-Sets (X3C) Problem that uses certain 
fundamental
facts from Number Theory.

\begin{lemma}
The MP Problem is NP-complete.
\end{lemma}

\begin{proof}
We will show a reduction from the Exact Cover by 3-Sets Problem to the Multiplicative Partition Problem.

Pick any instance of X3C---a collection $C=\set{S_1,\ldots,S_n}$ of $3$-element subsets of the set $U=\set{1,\ldots,3m}$, $S_i=\set{a_i,b_i,c_i}$. Recall that the instance is positive if, and only if, there is a subcollection of $C$ such that every element of $U$ occurs in exactly one subset of the subcollection. We can assume that $m\leq n$ and that the union of every subset of $C$ is $U$, as otherwise the instance of X3C is definitely negative.
We build an instance of MP as follows. We begin by enumerating the first consecutive primes $p_1,\ldots,p_{3m+1}$. By the Prime Number Theorem (see e.g., \cite{IR90}) the prime $p_{3m+1}$ has value $O(m \ln m)$. Therefore, we can enumerate the primes in time $O(m^2\log^2 m)$ using the Eratosthenes sieve; this time is polynomial with respect to the size of the instance of X3C. 
The instance of the MP Problem will have $n+2$ sizes. The first $n$ are defined as products of primes indexed by the elements of subsets i.e., $s_i=p_{a_i}\cdot p_{b_i}\cdot p_{c_i} \geq 2$. Note that $s_i$'s are fairly small, so that each $s_i$ is $O(n^6)$. 
We add two ``dummy'' sizes. Let $p$ be the product of sizes $s_1$ through $s_n$, and $q$ be the product of the first $3m$ primes; each is $O(n^{6n})$, so sufficiently small. Note that $p/q$ is an integer, because $\bigcup_i S_i=U$. We define $s_{n+1} = p_{3m+1}\cdot p/q$ and $s_{n+2} = p_{3m+1}\cdot q$. Therefore, the instance of the MP Problem can be represented in time and space polynomial with respect to $n$, as desired.

Suppose that the instance of X3C is positive, and let $Q$ be the indices that define the subcollection. But then $\prod_{i\in Q} s_i$ is the product of all primes from $p_1$ to $p_{3m}$. So then $\prod_{i\notin Q} s_i$ is equal to the product of the $s_i$'s divided by the product of these primes. Therefore,
$$
s_{n+1}\prod_{i\in Q} s_i
=
s_{n+2}\prod_{i\notin Q} s_i 
~,
$$
and so the instance of MP is positive.

Suppose that the instance of MP is positive, and let $P$ be a subset such that $\prod_{i\in P} s_i = \prod_{i\notin P} s_i$. Note that there are only two sizes, $s_{n+1}$ and $s_{n+2}$, that have the prime $p_{3m+1}$ as a factor. But two natural numbers are equal if, and only if, they have the same factorization. Thus in order for the products to be equal, either $n+1$ or $n+2$ is in $P$, but not both. We can assume, without loss of generality, that $n+1$ is in $P$.
Let $L=P\setminus\set{n+1}$, and $R$ be the elements not in $P$ and other than $n+2$. But then 
$$
s_{n+1}\cdot\prod_{i\in L} s_i = s_{n+2}\cdot \prod_{i\in R} s_i
~,$$
which can be equivalently expressed as
$$
p/q\cdot\prod_{i\in L} s_i = 
q\cdot \prod_{i\in R} s_i
~.
$$
Of course, $L$ and $R$ partition $\set{1,\ldots,n}$, so $p=\prod_{i\in L} s_i
\cdot 
\prod_{i\in R} s_i
$. Thus
$
q^2 = 
\bra{\prod_{i\in L} s_i}^2
$, and since the numbers are positive, $q = 
\prod_{i\in L} s_i$. Therefore, there is a selection of the $s_i$'s whose product contains each prime from $p_1$ to $p_{3m}$ exactly once. This set $L$ defines a subcollection of $C$ such that every element of $U$ is contained in exactly one subset of the subcollection. So the instance of X3C is positive.
\end{proof}

The MP Problem is used to demonstrate NP-hardness of the second auxiliary problem called Partition with Minimum Sum of Products.

\FF
\noindent
{\bf Partition with Minimum Sum of Products (PMSP)}\\
{\it Instance:} Number $n$, rational numbers  $0<r_1,\ldots,r_n<1$.\\
{\it Objective:} Find a set $P$ that minimizes
$
\prod_{i\in P} r_i + \prod_{i\notin P} r_i
~.
$ 
\FF

\begin{lemma}
The PMSP Problem is NP-hard.
\end{lemma}

\remove{**********************
\begin{proof}[Proof sketch.]
The proof is based on the fact that for any $d>0$, the function $f(x) = e^x+e^{d-x}$ defined for $0\leq x\leq d$ is minimized exactly when $x=d/2$. We avoid using irrational numbers by selecting the $r_i$ appropriately.
\end{proof}
**********************}

\begin{proof}
We give a polynomial time Turing reduction of MP to PMSP. Specifically, we show how to find an answer to a given instance of MP by inspecting the minimum value for an appropriately constructed instance of PMSP. The proof is based on the fact that for any $d>0$, the function $f(x) = e^x+e^{d-x}$ defined for $0\leq x\leq d$ is minimized exactly when $x=d/2$. The difficulty is to ensure that we compute using rational numbers only.

Pick any instance of the MP Problem and let $s_1,\ldots,s_n$ be the positive integer sizes, each at least $2$. We define an instance of the PMSP Problem by taking $r_i=h_i^2$, where $h_i=1/s_i$. Then $0<r_i<1$, as desired. 
We shall see that we can decide exactly when the instance of MP is positive, just by checking if the minimum for the instance of PMSP is equal to the value $2\prod_{i=1}^n h_i$.

Let us see when the product $\prod_{i\in P} r_i + \prod_{i\notin P} r_i$ achieves that value. Let $d=\sum_{i=1}^n\ln r_i$. Then
$$
\prod_{i\in P} r_i + \prod_{i\notin P} r_i
=
e^{\sum_{i\in P} \ln r_i} + e^{d-\sum_{i\in P} \ln r_i }
\geq 2e^{d/2}
$$
with equality only when $d/2 = \sum_{i\in P} \ln r_i$. The crucial observation is that $2e^{d/2}$ is just $2\prod_{i=1}^n h_i$, which is a rational number. Thus we can invoke an oracle that solves PMSP on the instance that we have constructed, then inspect the minimum returned by the oracle, and determine whether or not there exists $P$ such that $d/2 = \sum_{i\in P} 2\ln h_i$. But the equation can be equivalently written as
$1/2 \sum_{i=1}^n \ln h_i = \sum_{i\in P} \ln h_i$, which is equivalent to $\sum_{i\in P} \ln h_i = \sum_{i\notin P} \ln h_i$. This in turn is equivalent to $\prod_{i\in P} h_i = \prod_{i\notin P} h_i$. But this is equivalent to 
$\prod_{i\in P} s_i = \prod_{i\notin P} s_i$. So 
the instance of MP is positive if, and only if, the minimum for the instance of PMSP is equal to 
$2\prod_{i=1}^n h_i$.
\end{proof}

We have prepared a toolset, and now we are ready to show the main hardness result of this subsection.

\begin{theorem}
The ROPAS Problem restricted to the dag with two independent tasks (where the number of workers may grow) is NP-hard.
\end{theorem}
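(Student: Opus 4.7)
The plan is a polynomial-time reduction from the PMSP Problem, whose NP-hardness was just established. Given a PMSP instance with rationals $0 < r_1, \ldots, r_n < 1$, I would build a ROPAS instance with the two-independent-task dag $\g$, $n$ workers, and symmetric success probabilities $p_{i,1} = p_{i,2} = 1 - r_i$. These are rational numbers in $(0,1)$, so the instance is well-formed and at least one $p_{i,j}$ is positive per task; the construction is clearly polynomial in the input size.

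By the preceding proposition on non-idling, an optimal regimen $\Sigma_{OPT}$ has $\Sigma_{OPT}(\emptyset)$ assigning every worker to an eligible task, thereby inducing a partition $P \subseteq [n]$ (workers sent to task $1$) with complement sent to task $2$. Let $F_1 = \prod_{i \in P} r_i$ and $F_2 = \prod_{i \notin P} r_i$ denote the per-round failure probabilities of the two tasks. The key structural observation is that $F_1 F_2 = \prod_{i=1}^n r_i$ is a constant independent of $P$; I would denote it $F$, and let $T = 1/(1-F)$.

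Applying Theorem~\ref{t.recursive} once at $\emptyset$: the four one-step outcomes have probabilities $F_1 F_2$, $(1-F_1)F_2$, $F_1(1-F_2)$, $(1-F_1)(1-F_2)$, and from either singleton state $\set{1}$ or $\set{2}$ the optimal continuation assigns every worker to the single remaining task, giving $B_{\set{1}} = B_{\set{2}} = T$ by the symmetry $p_{i,1}=p_{i,2}$. Substituting and using $F_1 F_2 = F$ yields
\[
B_\emptyset(P) \;=\; \frac{1 + T\bra{F_1 + F_2 - 2F_1 F_2}}{1 - F_1 F_2} \;=\; \frac{1 - 2TF}{1 - F} \,+\, \frac{T}{1-F}\bra{F_1 + F_2},
\]
which is an affine function of $F_1 + F_2$ with strictly positive slope $T/(1-F)$. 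Consequently, minimizing $B_\emptyset(P)$ over partitions $P \subseteq [n]$ is equivalent to minimizing $F_1 + F_2 = \prod_{i \in P} r_i + \prod_{i \notin P} r_i$, which is precisely the PMSP objective.

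A polynomial-time algorithm for ROPAS on this narrow instance therefore solves PMSP by reading off the partition from $\Sigma_{OPT}(\emptyset)$; hence ROPAS restricted to the dag of two independent tasks is NP-hard. The single thing one must get right is the invariance $F_1 F_2 = F$, which is exactly what makes $B_\emptyset(P)$ linear in the PMSP objective and lets one ROPAS oracle call answer PMSP; the remainder is a routine unrolling of the recurrence.
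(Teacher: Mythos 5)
Your proposal is correct and follows essentially the same route as the paper: the identical Turing reduction from PMSP using two independent tasks with $p_{i,1}=p_{i,2}=1-r_i$, resting on the same key invariance (your $F_1F_2=\prod_{i=1}^n r_i$ is the paper's observation that $\ell+r-\ell r$ and the cross-product term are independent of the partition), which makes the expected completion time an increasing affine function of the PMSP objective. The only cosmetic differences are that you invoke the general non-idling proposition where the paper rules out idling at $\emptyset$ directly via a partial-derivative computation, and that you phrase everything in terms of failure rather than success probabilities.
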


\begin{proof}
We give a polynomial time Turing reduction of PMSP to ROPAS. Pick any $0<r_1,\ldots,r_n<1$, and let us consider a dag with just two isolated nodes, call them left task and right task. There are $n$ workers. We define the probability of success of worker $i$, $p_{i,left}$ and $p_{i,right}$, to be $1-r_i$, and denote it simply by $p_i$. Pick a regimen $\Sigma$ that minimizes the expected completion time. Let $L$ be the set of workers assigned by $\Sigma(\emptyset)$ to the left task i.e., assigned when both tasks are eligible. We shall argue that the expected completion time is minimized if, and only if, $\prod_{i\in L} r_i +\prod_{i\notin L} r_i$ is minimized. Therefore, a best regimen can be transformed to a solution to the instance of PMSP.

In order to show the equivalence, we make a sequence of observations about the assignments that the regimen must make. Let $R$ be the workers assigned by $\Sigma(\emptyset)$ to the right task. Hence $L$ and $R$ are disjoint and included in $\set{1,\ldots,n}$. Let $\ell=1-\prod_{i\in L}(1-p_i)$ be the probability that at least one left worker succeeds, and $r=1-\prod_{i\in R}(1-p_i)$ that at least one right succeeds. We can use Theorem~\ref{t.recursive} to calculate the expected completion time $T_{\emptyset}$ 
\begin{align*}
T_{\emptyset}  = &
\frac{1+
\ell(1-r) \cdot T_{\set{left}}
+
(1-\ell)r \cdot T_{\set{right}}}{1-(1-\ell)(1-r)}~,
\end{align*}
where $T_{\set{left}}$ is the expected time to completion by the regimen starting with the left task already executed, and $T_{\set{right}}$ starting with the right task already executed. 

We can simplify this expression by observing that $T_{\set{left}}$ and $T_{\set{right}}$ are equal. Indeed, treating sets $L$ and $R$ as constants, we notice that $T_{\emptyset}$ is minimized when $T_{\set{left}}$ and $T_{\set{right}}$ are minimized. Let $K\subseteq \set{1,\ldots,n}$ be the set of workers assigned to the left task in $\Sigma(\set{right})$ i.e., when the right task has been executed and the left one is the only eligible task. Then the expected time to completion  $T_{\set{right}}$ is $1/\bra{1-\prod_{i\in K }\bra{1-p_i}}$. Since $0<p_i<1$, this expectation is clearly minimized if, and only if, $K=\set{1,\ldots,n}$. We can apply the same argument to conclude that the regimen assigns every worker to the right task when the task is the only one remaining to be executed. Therefore, $T_{\set{left}}=T_{\set{right}}$. Let us denote this expectation by $T$. Observe that $T>0$. Hence 
$$
T_{\emptyset} = 
\frac{1+T(\ell(1-r)+(1-\ell)r)}{1-(1-\ell)(1-r)}
=
T+\frac{1-T\ell r}{\ell+r-\ell r}
~.
$$

We notice that $\Sigma(\emptyset)$ assigns every worker to a task.
 In order to show this, we demonstrate that the expectation decreases as $\ell$ or $r$ increase. We notice that the formula is symmetric with respect to $\ell$ and $r$, so we only calculate its partial derivative with respect to $r$
$$
\diff{}{r}
=
\frac{-1+\ell - T\ell^2}{(\ell+r-\ell r)^2} 
~.
$$
This derivative is strictly negative for any $0<\ell<1$. So the expectation is minimized when $r$ and $\ell$ are maximized. This is equivalent to the condition that every worker must be assigned; in other words, the sets $L$ and $R$ partition the set $\set{1,\ldots, n}$. 

Let us summarize our observations so far: the regimen initially assigns every worker to a task, and when one task remains to be executed, the regimen assigns every worker to the task.

With these observations, we turn back to determining under what circumstances $T_{\emptyset}$ is minimized. Since $L$ and $R$ partition $\set{1,\ldots,n}$, then $\ell+r-\ell r$ is just the probability that at least one of the $n$ workers succeeds, the value of which is independent of how workers are assigned to left and right task. Thus the expectation is minimized if, and only if, $\ell r$ is maximized across partitions $L$ and $R$ of $\set{1,\ldots,n}$, since $T$ is now a constant. But
\begin{align*}
\ell r =&
1-\bra{\prod_{i\in L}(1-p_i)}
-\bra{\prod_{i\in R}(1-p_i)} +\\
&\bra{\prod_{i\in L}(1-p_i) \prod_{i\in R}(1-p_i)}
~,
\end{align*}
and since $L$ and $R$ is a partition, the expression in the third big parenthesis does not depend on the choice of $L$ and $R$. So the expression is maximized, for a partition $L$ and $R$ that minimizes
$
\prod_{i\in L}r_i
+\prod_{i\in R}r_i
~.$

We summarize the observations that we have made so far. A regimen minimizes expectation if, and only if, its set $L$ minimizes $\prod_{i\in L}r_i
+\prod_{i\notin L}r_i$ and the set $K=\set{1,\ldots,n}$. Thus the theorem is proven.
\end{proof}

\subsection{Scheduling wide dags with few workers is NP-hard}
\label{s.wide_few}

We show that the ROPAS Problem restricted to two workers, but where dag width can grow, is NP-hard. Toward this end, we first establish NP-completeness of an auxiliary problem. Using this problem, we establish NP-completeness of other problem. Finally, that other problem is used in a polynomial time Turing reduction that yields NP-hardness of the restricted ROPAS Problem.

We first demonstrate how to modify a reduction of Johnson~\cite{Joh87} to fit our purpose. The reduction is from a variant of Clique to the Balanced Complete Bipartite Subgraph Problem (cf.~\cite{Pee03}). The reduction takes a graph on $2g$ nodes and $e$ edges, and produces a bipartite graph with $\binom{g}{2}+g$ nodes on the left and $e\leq \binom{2g}{2}$ nodes on the right. The question of finding a $g$-clique in the original graph is show to be equivalent to that of finding a balanced complete bipartite subgraph with $\binom{g}{2}$ nodes on the left and the same number of nodes on the right. Since $\binom{g}{2}+g \leq \binom{2g}{2}$, we can pad the bipartite graph with isolated nodes for uniformity, and then the resulting problem is NP-complete.

\FF
\noindent
{\bf Fixed Ratio Balanced Complete Bipartite Subgraph (FIRBCBS)}\\
{\it Instance:} Number $g$, bipartite graph $\g$ with $\binom{2g}{2}$ nodes on the left and $\binom{2g}{2}$ on the right.\\
{\it Question:} Does $\g$ contain a balanced complete bipartite subgraph with $\binom{g}{2}$ nodes on the left and $\binom{g}{2}$ on the right?
\FF

\begin{lemma}[\cite{Joh87}]
The FIRBCBS Problem is NP-complete.
\end{lemma}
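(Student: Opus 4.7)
The plan is to prove NP-completeness by leveraging the reduction of Johnson~\cite{Joh87} from a restricted version of Clique to the (ordinary) Balanced Complete Bipartite Subgraph Problem (BCBS), and then modifying the image of that reduction so that both sides of the bipartite graph have exactly $\binom{2g}{2}$ vertices, matching the definition of FIRBCBS.

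Membership in NP is immediate: a witness is a subset of $\binom{g}{2}$ vertices on each side, and checking that all $\binom{g}{2}^2$ possible edges between them are present in $\g$ takes polynomial time. For hardness, I would start from the following variant of Clique, which is NP-complete by padding: given a graph $H$ on exactly $2g$ vertices, does $H$ contain a clique of size $g$? Johnson's reduction builds a bipartite graph $G'$ whose left side has $\binom{g}{2}+g$ vertices (encoding ``slot'' structure corresponding to a hypothetical $g$-clique, i.e., the $\binom{g}{2}$ pair-slots plus the $g$ vertex-slots), and whose right side has $e$ vertices (one per edge of $H$), with $e \leq \binom{2g}{2}$; adjacency is set up so that $H$ has a $g$-clique if and only if $G'$ has a balanced complete bipartite subgraph with $\binom{g}{2}$ vertices on each side. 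I would take this reduction as a black box and concentrate on the adaptation.

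To turn $G'$ into an instance of FIRBCBS, I would pad both sides independently with isolated vertices up to the target size $\binom{2g}{2}$. This is possible because the inequality $\binom{g}{2}+g \leq \binom{2g}{2}$ holds for all $g \geq 1$ (and $e \leq \binom{2g}{2}$ by construction), so neither side needs to shrink. The verification in both directions is short: any balanced complete bipartite subgraph with $\binom{g}{2} \geq 1$ vertices on each side cannot include an isolated vertex (since such a vertex has no neighbours at all on the opposite side), so a solution in the padded instance is a solution in $G'$; conversely, any solution in $G'$ is still a solution in the padded instance, which contains $G'$ as a subgraph. Combined with the correctness of Johnson's reduction, this yields the equivalence between $g$-clique in $H$ and the FIRBCBS instance, so FIRBCBS is NP-hard, and together with the NP-membership argument, NP-complete.

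The main obstacle is purely expository: Johnson's reduction itself is nontrivial and I am delegating its correctness to the cited paper, so the substance of my contribution is really just the padding argument together with the arithmetic check $\binom{g}{2}+g \leq \binom{2g}{2}$. The one subtlety worth flagging is that the padding must be done on \emph{both} sides, and the observation that isolated vertices cannot contribute to a complete bipartite subgraph whose other side is nonempty is what keeps the reduction from creating spurious ``yes'' instances.
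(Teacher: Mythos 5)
Your proposal matches the paper's treatment: the paper likewise takes Johnson's reduction (from the variant of Clique on $2g$ nodes to a bipartite graph with $\binom{g}{2}+g$ left nodes and $e\leq\binom{2g}{2}$ right nodes) as a black box and pads with isolated nodes up to $\binom{2g}{2}$ on each side, using the inequality $\binom{g}{2}+g\leq\binom{2g}{2}$. Your added observations (NP membership, and that isolated vertices cannot join a nonempty balanced complete bipartite subgraph) are correct and only make explicit what the paper leaves implicit.
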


The FIRBCBS Problem can be reduced to an auxiliary problem of finding many subsets whose union is small. 

\FF
\noindent
{\bf Fixed Ratio Many Subsets with Small Union (FIRMSSU)}\\
{\it Instance:} Number $k$, nonempty subsets $S_1,\ldots,S_{3k}$ of the set $\set{1,\ldots,3k}$ whose union is $\set{1,\ldots,3k}$.\\
{\it Question:} Are there $2k$ of these subsets whose union has cardinality at most $2k$?
\FF

The following lemma generalizes an earlier result of~\cite{GaoM04}.
We can prove the lemma using FIRBCBS and padding.

\begin{lemma}
The FIRMSSU Problem is NP-complete.
\end{lemma}

\begin{proof}
Take any instance of FIRBCBS. We consider the $\binom{2g}{2}$ by $\binom{2g}{2}$ matrix $A=(a_{i,j})$ that is the ``complement'' of the adjacency-matrix of $\g$ i.e., $a_{i,j}=0$ if left node $i$ is linked to right node $j$, and $1$ if they are not linked. Its rows and columns can be (independently) rearranged so that the $\binom{g}{2}$ by $\binom{g}{2}$ submatrix in the upper left hand corner contains only zeros if, and only if, the instance is positive. 
Now we will pad $A$ by adding ``strips'' of certain widths (a sketch is in Figure~\ref{f.padding}). Pick the smallest $k$ such that $\binom{2g}{2} < k$. Notice that for such $k$, $\binom{g}{2}<k$. Let $a=k-\binom{g}{2}$ and $b=3k-\binom{2g}{2}-a-k$. This implies that $b>\binom{g}{2}\geq 1$. We then add $a$ rows and $a+k$ columns, every filled with zeros, then $k+b$ rows and $b$ columns, every filled with ones. 
The resulting matrix $A'$ has the property that its rows and columns can be (independently) rearranged so that the $k$ by $2k$ rectangle in the upper left hand corner contains only zeros if, and only if, the instance is positive. The matrix also has at least one row and one column containing only ones, because $b\geq 1$.

We define an instance of FIRMSSU such that each column of $A'$ is a characteristic vector of a subset. This clearly yields a desired instance and completes the reduction.
\end{proof}

\begin{figure}[htb]
\centerline{\epsfig{file=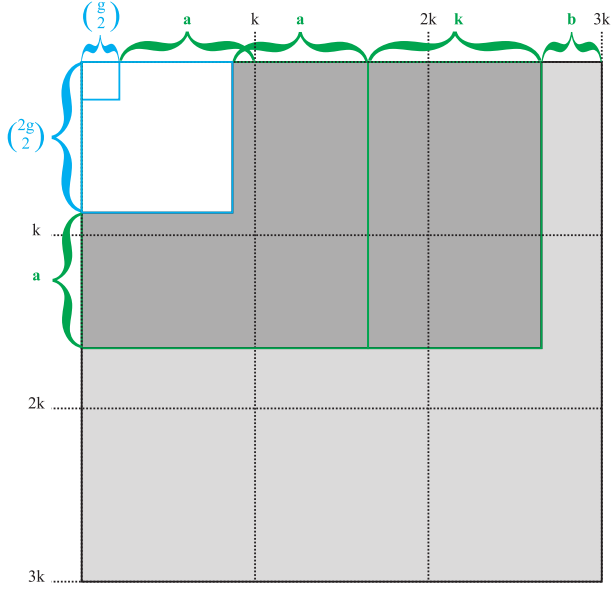,width=3.2in}}
\caption{A padding pattern in the reduction from FIRBCBS to FIRMSSU; light grey areas correspond to ones, dark grey to zeros.
\label{f.padding}}
\end{figure}

We are now ready to prove the main hardness result of the subsection.

\begin{theorem}
\label{t.two}
The ROPAS Problem restricted to two workers (where the dag width may grow) is NP-hard.
\end{theorem}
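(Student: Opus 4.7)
The plan is a polynomial time Turing reduction from FIRMSSU, the problem just shown to be NP-complete, to ROPAS restricted to two workers. Given a FIRMSSU instance with subsets $S_1,\ldots,S_{3k}$ of $\{1,\ldots,3k\}$, I would construct a dag $\g$ on $O(k)$ tasks together with worker-task success probabilities, so that the optimum expected completion time drops below an explicit rational threshold exactly when the FIRMSSU instance is positive. The hope is that, exactly as in Section~\ref{s.narrow_many}, asking the ROPAS solver for the optimum and comparing it to this threshold then decides FIRMSSU.

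The construction would encode the FIRMSSU instance directly into the dag: introduce an element task $e_i$ for each $i\in\{1,\ldots,3k\}$ and a subset task $t_j$ for each $S_j$, with $t_j$ having parents exactly $\{e_i : i \in S_j\}$. To partition the duties of the two workers I would attach control gadgets such as an initial chain and/or a small pool of padding tasks, and choose most $p_{i,j}$ from $\{0,1\}$ so that, for example, Worker~1 is forced to drive the chain and then the elements while Worker~2 is forced to execute the subsets. A small number of tasks would carry strictly fractional probabilities, giving the numerical slack needed to separate the two cases in expectation.

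The analysis would then establish that the two instances are distinguishable. In the positive case, there is a choice of $2k$ elements whose completion in the first $2k$ rounds makes $2k$ subsets simultaneously eligible; the regimen based on that choice keeps both workers continuously engaged, and Theorem~\ref{t.recursive} yields an explicit target expected completion time $T_{\mathrm{YES}}$. In the negative case, any ordering of element execution causes the pool of eligible subsets to grow too slowly, since no $2k$-subset selection can be unlocked by only $2k$ elements, and this forces the subset worker into at least one extra round of idling (or, in its probabilistic form, extra expected delay); a careful lower bound argument, leaning on the monotonicity granted by Lemma~\ref{l.monotone}, would produce a strictly larger $T_{\mathrm{NO}}$.

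The chief obstacle is calibrating the construction so that the gap $T_{\mathrm{NO}}-T_{\mathrm{YES}}$ is strictly positive, has polynomially many bits, and cannot be closed by an unintended ``cheating'' regimen -- for instance, one that swaps worker duties, idles a worker at the source against the intent of the gadget, or reorders element execution in a way not captured by the FIRMSSU combinatorics. The bulk of the proof is therefore likely a careful case analysis of these alternative regimens, showing each of them is no better than the regimen associated with an optimal FIRMSSU witness, followed by clean applications of Theorem~\ref{t.recursive} to derive the precise values of $T_{\mathrm{YES}}$ and $T_{\mathrm{NO}}$ and hence complete the reduction.
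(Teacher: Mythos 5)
Your overall strategy---a polynomial time Turing reduction from FIRMSSU in which element tasks feed subset tasks, the two workers are forced into disjoint duties by $0/1$ probabilities, padding keeps one worker saturated, and a negative instance forces idling---is the same as the paper's in spirit. But as written the proposal leaves open exactly the parts that constitute the proof.

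First, the paper never uses fractional probabilities: every $p_{i,j}$ in its construction is $0$ or $1$, so the instance is deterministic, the ``expected completion time'' is literally a round count, and the YES/NO gap is a full integer round ($n^2+n$ versus at least $n^2+n+1$, with $n=3k$). Your plan to introduce ``a small number of strictly fractional probabilities'' for ``numerical slack'' is not only unnecessary but would make the negative-case bound much harder, since you would then have to lower-bound an expected hitting time over all regimens rather than count rounds; Lemma~\ref{l.monotone} does not supply such a lower bound. Second, your dag on $O(k)$ tasks, with the parents of $t_j$ being exactly $\{e_i : i\in S_j\}$, omits the amplification the paper relies on: each element $i$ is blown up into a group of $n$ sources, all of which must be executed before any subset task containing $i$ becomes eligible, so the negative case's ``$2k+1$ elements needed'' becomes ``$2/3\,n^2+n$ sources needed'' and provably starves the second worker of $B_2$-work at the critical round no matter how the first worker interleaves its tasks; with one task per element that starvation margin is a single round, and your sketch does not establish it against arbitrary regimens. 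Third, the negative-case lower bound---the crux---is deferred to ``a careful lower bound argument''; in the paper it is a concrete count of the $\Theta(k^2)$ tasks that only the second worker can execute and of the rounds available to it. Until the gadget sizes are fixed and that count is carried out, what you have is a plausible plan rather than a proof.
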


\begin{proof}
We give a polynomial time Turing reduction from FIRMSSU to ROPAS---by checking if the minimum expectation is small enough we can determine exactly when a given instance of FIRMSSU is positive. 

Take any instance of FIRMSSU with $n=3k$ nonempty subsets $S_1,\ldots,S_{3k}$ on $\set{1,\ldots,3k}$.
We construct an instance of ROPAS with two workers. The dag will have three levels (an example is in Figure~\ref{f.reduction_wide}).
We start with a bipartite dag with $n^2$ sources $A_{1}$ and $n$ sinks $B_{2}$. Sources are partitioned into $n$ groups of cardinality $n$ each. There is an arc leading from a source of group $i$ to a sink $j$ exactly when $i$ is in the set $S_j$. 
There are $2/3n^2$ extra sources $A_2$. Arcs lead from every extra source to every sink $B_{2}$. 
There are $2/3n$ internal tasks $B_1$. Arcs lead from every source of $A_2$ to every internal task of $B_1$. 
There are $1/3n^2$ extra sinks $C_2$. Arcs lead from every internal task of $B_1$ to every extra sink of $C_2$. 
The first worker is certain to execute every task from $A_1$ and $B_1$, while the second worker fails on any. The second worker is certain to execute every task from $A_2$, $B_2$ and $C_2$, while the first worker fails on any.

\begin{figure}[htb]
\centerline{\epsfig{file=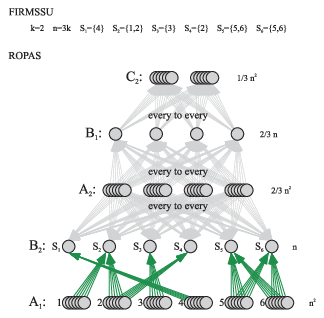,width=3.3in}}
\caption{An example of a reduction from FIRMSSU to ROPAS with two workers and unconstrained dag width. The dag has three levels denoted by $A$, $B$ and $C$. Tasks in sets $A_1$ and $B_1$ are executed only by the first worker; tasks in sets $A_2$, $B_2$ and $C_2$ only by the second worker.
\label{f.reduction_wide}}
\end{figure}

Suppose that the instance of FIRMSSU is positive i.e., there are $2k$ subsets whose union is at most $2k$. Let $U$ be the union of the $2k$ subsets---if $U$ has cardinality less than $2k$, add to $U$ arbitrary elements of $\set{1,\ldots,3k}$ so as to increase the cardinality of $U$ to $2k$. We construct a regimen. The second worker executes sources $A_2$ during the first $2/3n^2$ rounds. Hence the $2/3n$ internal tasks of $B_1$ become eligible. Meanwhile, the first worker executes the $2/3n^2$ sources of $A_1$ from the groups corresponding to $U$. As a result, at least $2/3 n$ sinks of $B_{2}$ become eligible. Following round $2/3n^2$, the second worker begins executing the eligible sinks of $B_{2}$, and computes $2/3n$ of them by the end of round $2/3n^2 + 2/3n$. Meanwhile, the first worker executes the $2/3n$ internal tasks $B_1$, which makes every of the $1/3n^2$ sinks of $C_2$ eligible. Following round $2/3n^2 + 2/3n$, the second worker executes $1/3n^2$ sinks of $C_2$, while the first worker resumes executing the remaining $1/3 n^2$ sources of $A_1$. Thus by the end of round $n^2 + 2/3n$, the remaining $1/3n$ unexecuted sinks of $B_{2}$ are eligible. Finally, the second worker executes these sinks. Hence the computation is completed at the end of round $n^2 +n$.

Suppose now that the instance of FIRMSSU is negative. Pick any regimen. For the reason that the instance is negative, the union of any $2k$ subsets has cardinality at least $2k+1$. But the cardinality of any group is $n$, so at least $2/3n^2+n$ sources of $A_{1}$ must be executed in order to have $2/3n$ sinks of $B_{2}$ eligible. Therefore, at most $2/3n-1$ sinks of $B_{2}$ can get executed by the end of round $2/3n^2+2/3n$. By the end of this round, no sinks of $C_2$ can be executed, because each depends on the execution of $2/3n^2+2/3n$ other tasks. Hence, by the end of that round, there are at least $1/3n+1 + 1/3n^2$ unexecuted tasks that can only be executed by the second worker. As a result the regimen takes at least $n^2+n+1$ rounds to complete computation.
\end{proof}

\subsection{Scheduling wide dags with many workers is inapproximable}

We have seen that ROPAS is NP-hard either when dag can have large width or when the number of workers can be large. It is, therefore, interesting to see how hard the problem is when not only dag width but also the number of workers can be large. Here we show that then ROPAS is inaproximable with a factor less than $5/4$, unless P=NP.

\begin{theorem}
The ROPAS Problem cannot be approximated with factor less than $5/4$, unless P=NP.
\end{theorem}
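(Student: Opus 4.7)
The plan is to obtain the inapproximability bound via a gap-preserving polynomial time reduction from an NP-complete problem (for example \emph{Exact Cover by 3-Sets}, or the \emph{FIRMSSU} problem used in Section~\ref{s.wide_few}) to ROPAS. The reduction will produce ROPAS instances with deterministic workers (each $p_{i,j}\in\{0,1\}$) such that positive source instances admit a regimen with completion time exactly $T$, while negative instances force every regimen to take at least $(5/4)\cdot T$ rounds. Any polynomial time algorithm with approximation ratio strictly less than $5/4$ could then distinguish the two cases in polynomial time, forcing $\mathrm{P}=\mathrm{NP}$.

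When every $p_{i,j}\in\{0,1\}$, ROPAS degenerates into a classical deterministic dag scheduling problem: the minimum expected completion time equals the minimum makespan over valid schedules, and whether a worker succeeds on a task is fully determined by its $0/1$ reliability. This lets both directions of the gap analysis be purely combinatorial. The YES direction exhibits an explicit fast schedule in which every worker performs useful work each round; the NO direction is a counting argument showing that, because of the underlying combinatorial bottleneck, a constant fraction of rounds must feature reduced effective parallelism, pushing the total length to at least $(5/4)\cdot T$.

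To lift the merely additive gap of Theorem~\ref{t.two} into a multiplicative $5/4$ gap, the construction will use many phases and many specialized worker classes, each class capable of executing only tasks of one type. Tasks within a phase are organized so that their inter-phase transitions are governed by the underlying combinatorial instance. In the YES case, the selected subcollection lets every worker class advance through its phase in lockstep, producing an ideal four-round-per-phase schedule; in the NO case, any feasible choice leaves at least one class short of eligible work during some round of the phase, degrading throughput from four units per four rounds to four units per five rounds and thereby scaling the total completion time by $5/4$.

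The main obstacle is engineering the phase gadget so that imperfect combinatorial choices demonstrably cost exactly one extra round per phase, and so that these costs accumulate additively across polynomially many phases rather than being hidden by cross-phase slack. This will require a delicate pipelining argument: rounds lost in one phase cannot be recovered by overlapping work from later phases because precedence constraints enforce a strict sequential progression between phases of compatible type. Getting the ratio to be exactly $5/4$ rather than some weaker ratio like $7/6$ or $9/8$ will depend on the specific NP-complete problem chosen and on the fine structure of the per-phase gadget designed around it; this is where I expect the bulk of the technical work to lie.
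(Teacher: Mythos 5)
Your opening frame is right --- a gap reduction from FIRMSSU with deterministic workers ($p_{i,j}\in\{0,1\}$), a YES-side explicit schedule and a NO-side counting argument --- but the core of your plan takes a wrong turn, and the part you flag as ``where the bulk of the technical work will lie'' is precisely the part that the paper shows you do not need. You assume that a multiplicative $5/4$ gap must be manufactured by amplification: polynomially many phases, specialized worker classes, and an argument that a one-round loss per phase accumulates additively without being absorbed by cross-phase slack. That accumulation/pipelining argument is genuinely hard (lost rounds in one phase \emph{can} in general be hidden by overlapping eligible work from other phases under a precedence dag), and your proposal does not resolve it; as written it is a gap in the proof, not a deferred detail.

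The paper's construction avoids amplification entirely. Because both the number of workers and the dag width are unconstrained, one can build a \emph{single} three-level dag (sources $A$, internal tasks $B_a\cup B_b$, sinks $C_a\cup C_b$) with $4k$ workers, each succeeding on every task with probability one, such that a positive FIRMSSU instance admits completion in exactly $4$ rounds while a negative instance forces at least $5$ rounds: in the negative case at most $2k-1$ sinks of $C_a$ can become eligible by the end of round two, leaving at least $4k+1$ sinks that cannot all be executed in a single later round by $4k$ workers. Since the optimum in the YES case is the constant $4$, one extra round is already a multiplicative factor of $5/4$; no phases, no specialized worker classes, and no additivity argument are required. (Your instinct that worker specialization is needed comes from the two-worker reduction of Theorem~\ref{t.two}; in the unconstrained setting the paper instead uses uniform workers and lets the width of the counting bottleneck do the work.) To repair your proposal, replace the multi-phase gadget with a constant-round gadget of this kind and carry out the $4$-versus-$5$ counting argument directly.
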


\begin{proof}
The proof constructs a dag similar to that in the proof of Theorem~\ref{t.two}. Since we can afford many workers, the execution of the dag can be completed in as few as four rounds under favorable circumstances. When circumstances are not favorable, any regimen needs five rounds or more because some workers will necessarily idle. This yields a desired gap.

Specifically, we show a gap creating reduction from FIRMSSU to ROPAS.
Take any instance of FIRMSSU with $n=3k$ nonempty subsets $S_1,\ldots,S_{3k}$ on $\set{1,\ldots,3k}$. 
We construct an instance of the ROPAS Problem as follows. 
There are $4k$ workers. For any task and any worker, the probability that the worker succeeds in executing the task is one. The dag has three levels (an example is in Figure~\ref{f.reduction}). There are $6k$ sources, denoted by $A$, partitioned into $3k$ groups, associated with numbers $1$ through $3k$, of two tasks each. There are $3k$ internal tasks, denoted by $B_a$, associated with numbers $1$ through $3k$. Any source from $A$ from group $i$ is linked to the internal task $i$ from $B_a$. In addition, there are $k$ extra internal tasks, denoted by $B_b$. Each source from $A$ is linked to every internal task from $B_b$. There are $3k$ sinks, denoted by $C_a$, associated with subsets $S_1$ through $S_{3k}$. Any internal task $i$ from $B_a$ is linked to sink $S_j$ if, and only if, $i$ is in the subset $S_j$. In addition, there are $3k$ extra sinks, denoted by $C_b$. Each internal task from $B_b$ is linked to every sink from $C_b$.
We shall see that there is a gap in the minimum expected time to completion for the instance of ROPAS, that differentiates the case when the instance of FIRMSSU is positive from the case when it is negative.

Suppose that the instance of FIRMSSU is positive i.e., there are $2k$ of the subsets whose union has cardinality at most $2k$. We will argue that the minimum expected time to completion is at most $4$. Let $U$ be the union of the $2k$ subsets---if $U$ has cardinality less than $2k$, add to $U$ arbitrary elements of $\set{1,\ldots,3k}$ so as to increase the cardinality of $U$ to $2k$. The set $U$ represents $2k$ internal tasks from $B_{a}$. 
We describe a regimen. In the first round, we execute the $4k$ sources from $A$ the children of which are the $2k$ internal tasks from $B_a$ associated with tasks of $U$. This makes the tasks associated with $U$ eligible. In the second round, we execute the $2k$ eligible tasks from $B_a$. Because the instance is positive, $2k$ sinks from $C_a$ become eligible. In addition, we execute the remaining $2k$ sources from $A$. Because now every source is executed, the remaining $2k$ internal tasks are eligible at the end of the second round. In the third round, we execute the $2k$ eligible sinks from $C_a$ and the $2k$ eligible internal tasks from $B_a$ and $B_b$. Because now every internal task is executed, any non-executed sink is eligible. In the fourth round, we executed the remaining $4k$ sinks.

Suppose that the instance of FIRMSSU is negative i.e., the union of any $2k$ of the subsets has cardinality $2k+1$ or more. Take an arbitrary regimen. We will argue that any execution must take at least five rounds. 
At the end of the first round, at most $2k$ internal tasks can be eligible, and these must be tasks from $B_a$. Indeed, each task from $B_b$ is connected to each of the $6k$ sources from $A$, while at most $4k$ of them have been executed; also any internal task from $B_a$ is connected to two distinct sources.
Therefore, in round two, the only internal tasks that can get executed are at most $2k$ tasks from $B_a$. Because the instance is negative, any $2k$ subsets have cardinality $2k+1$ or more, and so one would have to execute at least $2k+1$ tasks from $B_a$ to make $2k$ sinks from $C_a$ eligible. Thus at most $2k-1$ sinks from $C_a$ can become eligible at the end of the second round. But there are $6k$ sinks in total, so at least $4k+1$ of them are not eligible then.
In round three, some nodes get executed, but none of the $4k+1$ sinks. Even if every of these sinks is eligible at the end of round three, then in round four at most $4k$ of them can get executed, leaving one unexecuted at the end of round four. 
So the regimen requires at least five rounds to complete computation.
\end{proof}

\begin{figure}[htb]
\centerline{\epsfig{file=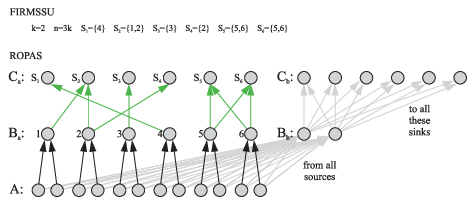,width=6in}}
\caption{Example of reduction from FIRMSSU to ROPAS with unconstrained number of workers and dag width. The dag has three levels denoted by $A$, $B$ and $C$. Every worker can execute any task.
\label{f.reduction}}
\end{figure}

\FF
{\bf Acknowledgements.}
The author thanks Bruce R. Barkstrom of NASA Langley Research Center and Jennifer M. Schopf of Argonne National Laboratory for motivating this study, and Arnold L.~Rosenberg of University of Massachusetts Amherst for discussions on dag scheduling. One missing part of ``the square'' was proven by inspiration of Xiuning Le.
\FF


\remove{*********************

\renewcommand{\thetheorem}{2.1}

\begin{lemma}
For any set $X$ of tasks that satisfies precedence constraints, $B_X$ is finite.
\end{lemma}

\begin{proof}
We define a regimen that has finite expectation; thus minimum expectation must be finite, too. We take a topological sort $t_1,\ldots,t_m$ of the subdag of $\g$ induced by tasks other than $X$. Note that it is possible to execute tasks in the order of the sort because once tasks $X$ and $t_1,\ldots,t_{j-1}$ have been executed, task $t_j$ is eligible. Our regimen will follow this order, each time assigning all workers to the task at hand. The probability that every worker fails to execute task $t_j$ is $\prod_{i=1}^n (1-p_{i,t_j})$, which by assumption is strictly smaller than $1$. Thus the expected time to execute the task is $1/(1-\prod_{i=1}^n (1-p_{i,t_j}))$, because execution time follows geometric distribution. We can use linearity of expectation to conclude that the expected time to completion for the regimen is just the sum of expectations for each individual task in the sort. Thus the expectation is at most $\sum_{j=1}^t1/(1-\prod_{i=1}^n (1-p_{i,j}))$, which is finite. 
\end{proof}

\renewcommand{\thetheorem}{3.1}

\begin{lemma}
Consider a regimen $\Sigma$ and a set $X$ of tasks that satisfies precedence constraints and does not contain all sinks of $\g$. Let the expected time to completion for the regimen starting with tasks $X$ already executed be finite.
Let $D_0,D_1,\ldots,D_k$ be all the distinct subsets of $E(X)$, and $D_0=\emptyset$.
Let $a_i$ be the probability that $D_i$ is the set of tasks executed by workers in the assignment $\Sigma(X)$. Let $X_i=X\cup D_i$. Then
$$
T_{X_0} = \frac{1}{a_1+\ldots+a_k}\bra{1+\sum_{i=1}^k a_i \cdot T_{X_i}}
~,
$$
where $T_{X_i}$ is the expected time to completion for regimen $\Sigma$ starting with tasks $X_i$ already executed.
\end{lemma}

\begin{proof}
Recall that the time to completion is determined by iterating the following three steps of a random process until $X$ contains all sinks of $\g$: (1) workers get assigned to tasks of $E(X)$ as given by the function $\Sigma(X)$, (2) workers attempt to execute the assigned tasks, (3) tasks executed in step (2) are added to $X$. Our goal is to compute the expected time to completion.

Let $T$ be a random variable equal to the time to completion for the regimen that starts with tasks $X$ already executed. Note that $E(X)$ is not empty.
The variable can be represented as a sum of two random variables: the time $T_L$ spent until, and including the moment when, at least one task of $E(X)$ has been executed, and the remaining time $T_R$ to completion.
By assumption $\Exp{T_L+T_R}$ is finite. Since $T_L$ and $T_R$ are non-negative, $\Exp{T_L}$ and $\Exp{T_R}$ are finite, too. Hence we can use linearity of expectation to write
$$
\Exp{T}=\Exp{T_L+T_R}=\Exp{T_L}+\Exp{T_R}
~.
$$

Let us calculate each of the two expectations. The first is particularly easy, because $T_L$ has geometric distribution with probability of success equal to $1-a_0=a_1+\ldots+a_k$. This probability cannot be zero because if so then $\Exp{T_L}$ would be infinite. So $\Exp{T_L}=1/(a_1+\ldots+a_k)$.

How about $\Exp{T_R}$? Since the expectation is finite, the probability that $T_R$ is infinite is zero. Hence we can write the expectation as a series
\begin{align*}
\Exp{T_R}
= &
\sum_{n=0}^\infty n \Pr{\text{remaining time is $n$}}
\end{align*}
that converges absolutely.
But the event ``remaining time is $n$'' is a disjunction of $k$ mutually exclusive events: ``(remaining time is $n$) {\em and} ($D_i$ is the set of tasks first executed by workers)'', for $1\leq i\leq k$. Somewhat abusing notation, let us denote by $D_i$ the event ``$D_i$ is the set of tasks first executed by workers''. We can ignore these events that do not occur, and then 
\begin{align*}
\Pr{\text{remaining time is $n$}}
=
\sum_{(1\leq i \leq k) ~\wedge~ (\Pr{D_i}>0)}
\Pr{\text{remaining time is $n$ given that $D_i$}}
\cdot
\Pr{D_i}
~.
\end{align*}
But for such $D_i$, 
$$
1/\Pr{D_i}\cdot\Pr{\text{remaining time is $n$}}
\geq 
\Pr{\text{remaining time is $n$ given that $D_i$}}
~.$$
Thus the series
$\sum_{n=0}^\infty n \Pr{\text{remaining time is $n$ given that $D_i$}}$ converges absolutely. We can, therefore, rearrange the following sum of series
$$
\sum_{(1\leq i \leq k) ~\wedge~ (\Pr{D_i}>0)}
\Pr{D_i}
\bra{
\sum_{n=0}^\infty
n\Pr{\text{remaining time is $n$ given that $D_i$}}
}
$$
and conclude that it is equal to $\Exp{T_R}$.
But the expression in brackets is just $T_{X_i}$, since again it is unlikely that the remaining time is infinite given that $D_i$, when $\Pr{D_i}>0$. Thus $\Exp{T_R} = \sum_{i=1}^k \Pr{D_i} T_{X_i}$, since when $\Pr{D_i}=0$, then $\Pr{D_i} T_{X_i}=0$, irrespective of the value of $T_{X_i}$.
\end{proof}

\renewcommand{\thetheorem}{3.2}

\begin{lemma}
Let $\a$ be the admissible evolution of execution for a dag $\g$. Then $\a$ is dag. Its nodes are exactly the sets of tasks that satisfy precedence constraints. It has a single source $\emptyset$ and a single sink $N_{\fg}$.
\end{lemma}

\begin{proof}
We verify the assertions in turn.
The graph $\a$ cannot have any cycle, because any arc points from a node $X$ to a node $X\cup D$ that is a set with strictly more elements than the set $X$.

Nodes of $\a$ are exactly the sets of tasks that satisfy precedence constraints. Indeed, if $X$ satisfies precedence constraints, then clearly so does its union with any subset of $E(X)$. Thus any node of $\a$ satisfies precedence constraints. Now pick any subset $Y$ of tasks that satisfies precedence constraints. Let $t_1,\ldots,t_{\card{Y}}$ be a topologically sort of the subdag of $\g$ induced by $Y$. Clearly, $t_j$ belongs to the set of tasks eligible when tasks $t_1,\ldots,t_{j-1}$ have been executed, for any $j$. So if $\set{ t_1,\ldots,t_{j-1}}$ is a node of $\a$, so is $\set{t_1,\ldots,t_{j}}$. Since $\emptyset$ is a node of $\a$, $Y$ must also be. A corollary to this is that $N_{\fg}$ is a node of $\a$.

We add a node $Y$ to $N_{\fa}$ only if there is an arc leading to $Y$ from some other node. So $Y$ cannot be a source. However, $\emptyset$ is a source because no arc leads to a set with the same number or fewer elements.

Pick any node $X$ of $\a$ and suppose that it does not contain all sinks of $\g$. By looking at a topological sort of $\g$ we notice that there is a task of $\g$ not in $X$ all whose parents in $\g$ are in $X$. Thus $E(X)$ is not empty, and so $X$ has a child in $\a$. Pick any node $X$ of $\a$ that contains all sinks of $\g$. Since $X$ satisfies precedence constraints, it contains all tasks, and so $X=N_{\fg}$. But $E(N_{\fg})$ is empty, so $X$ has no children in $\a$.
\end{proof}

\renewcommand{\thetheorem}{3.3}

\begin{theorem}
When the algorithm OPT halts, for any node $X$ of $\a$, the expected time to completion for regimen $\Sigma_{OPT}$ starting with tasks $X$ already executed is equal to the minimum expected time to completion that any regimen can achieve starting with tasks $X$ already executed.
\end{theorem}

\begin{proof}
Let $Y_1,\ldots,Y_m$ be the topological sort of $\a$ which was used by the algorithm OPT. We argue that the following invariant holds during the reverse processing of the sort:
\begin{quote}
For all $i$ such that $h\leq i\leq m$, $S_{Y_i} = B_{Y_i}$ and $S_{Y_i}$ is equal to the expected time to completion of $\Sigma_{OPT}$ that starts with tasks $Y_i$ already executed.
\end{quote}

This invariant is clearly true for $h=m$. Indeed, by Lemma~\ref{l.dag}, $\a$ has just one sink which is equal to $N_{\fg}$, and so $Y_m = N_{\fg}$. If all tasks are already executed, the minimum expected time to completion is zero. The algorithm, thus, correctly assigns zero to $S_{Y_m}$. The expectation of any regimen that starts with all tasks already executed is zero, and so the (arbitrary) value of $\Sigma_{OPT}(Y_m)$ is satisfactory.

Now pick any $2\leq h\leq m$. We shall see that after OPT has processed $Y_{h-1}$ the invariant is true for $h$ decreased by one. Let $X=Y_{h-1}$. Since $X$ is not a sink of $\a$, it has at least one child. Let $X_1,\ldots,X_k$ be the children.

We first argue that the value assigned to $S_{X}$ is at most the minimum expected time to completion $B_X$.
Pick a regimen $\Sigma$ that achieves the minimum when starting with tasks $X$ already executed. Lemma~\ref{l.finite} ensures that $B_X$ is finite. Hence we can use Theorem~\ref{t.recursive} to express the expectation of the regimen as a weighted sum 
$$
B_X 
=
 1/(a_1+\ldots+a_k)\cdot(1+\sum_{i=1}^ka_iT_{X_i} )
$$
of expected times $T_{X_i}$ to completion for $\Sigma$ starting with tasks from the sets $X_i$ already executed. By the invariant $S_{X_i}=B_{X_i}$, and so the expectation $T_{X_i}$ is at least $S_{X_i}$. The dynamic program considers all assignments as candidates for $\Sigma_{OPT}(X)$, and the assignment $\Sigma(X)$ in particular. For this assignment the algorithm calculates the weighted sum $1/(a_1+\ldots+a_k)\cdot(1+\sum_{i=1}^ka_iS_{X_i} )$. Hence the weighted sum is at most $B_X$. The algorithm selects an assignment that minimizes a weighted sum, so the value of $S_X$ is at most $B_X$, as desired.

After values of $S_X$ and $\Sigma_{OPT}(X)$ have been set, the value of $S_X$ is clearly equal to the expectation of $\Sigma_{OPT}$ starting with $X$ already executed. Indeed, for any assignment considered by the algorithm, the weighted sum is, by Theorem~\ref{t.recursive}, equal to the expectation of $\Sigma_{OPT}$ that uses this assignment in place of $\Sigma_{OPT}(X)$ and starts with tasks $X$ already executed.

Since $\Sigma_{OPT}$ has expectation $S_X$ and $S_X$ is at most $B_X$, the expected time to completion for $\Sigma_{OPT}$ starting with $X$ already executed is actually equal to $B_X$. Hence the invariant holds when $h$ gets decreased by one. This completes the proof of the theorem. 
\end{proof}

\renewcommand{\thetheorem}{4.1}

\begin{lemma}
The MP Problem is NP-complete.
\end{lemma}

\begin{proof}
We will show a reduction from the Exact Cover by 3-Sets Problem to the Multiplicative Partition Problem.

Pick any instance of X3C---a collection $C=\set{S_1,\ldots,S_n}$ of $3$-element subsets of the set $U=\set{1,\ldots,3m}$, $S_i=\set{a_i,b_i,c_i}$. Recall that the instance is positive if, and only if, there is a subcollection of $C$ such that every element of $U$ occurs in exactly one subset of the subcollection. We can assume that $m\leq n$ and that the union of all subsets of $C$ is $U$, as otherwise the instance of X3C is definitely negative.
We build an instance of MP as follows. We begin by enumerating the first consecutive primes $p_1,\ldots,p_{3m+1}$. By the Prime Number Theorem (see e.g., \cite{IR90}) the prime $p_{3m+1}$ has value $O(m \ln m)$. Therefore, we can enumerate the primes in time $O(m^2\log^2 m)$ using the Eratosthenes sieve; this time is polynomial with respect to the size of the instance of X3C. 
The instance of the MP Problem will have $n+2$ sizes. The first $n$ are defined as products of primes indexed by the elements of subsets i.e., $s_i=p_{a_i}\cdot p_{b_i}\cdot p_{c_i} \geq 2$. Note that $s_i$'s are fairly small, so that each $s_i$ is $O(n^6)$. 
We add two ``dummy'' sizes. Let $p$ be the product of sizes $s_1$ through $s_n$, and $q$ be the product of the first $3m$ primes; each is $O(n^{6n})$, so sufficiently small. Note that $p/q$ is an integer, because $\bigcup_i S_i=U$. We define $s_{n+1} = p_{3m+1}\cdot p/q$ and $s_{n+2} = p_{3m+1}\cdot q$. Therefore, the instance of the MP Problem can be represented in time and space polynomial with respect to $n$, as desired.

Suppose that the instance of X3C is positive, and let $Q$ be the indices that define the subcollection. But then $\prod_{i\in Q} s_i$ is the product of all primes from $p_1$ to $p_{3m}$. So then $\prod_{i\notin Q} s_i$ is equal to the product of all $s_i$'s divided by the product of these primes. Therefore,
$$
s_{n+1}\prod_{i\in Q} s_i
=
s_{n+2}\prod_{i\notin Q} s_i 
~,
$$
and so the instance of MP is positive.

Suppose that the instance of MP is positive, and let $P$ be a subset such that $\prod_{i\in P} s_i = \prod_{i\notin P} s_i$. Note that there are only two sizes, $s_{n+1}$ and $s_{n+2}$, that have the prime $p_{3m+1}$ as a factor. But two natural numbers are equal if, and only if, they have the same factorization. Thus in order for the products to be equal, either $n+1$ or $n+2$ is in $P$, but not both. We can assume, without loss of generality, that $n+1$ is in $P$, because the roles of $P$ and its complement are symmetric. Let $L=P\setminus\set{n+1}$, and $R$ be all elements not in $P$ and other than $n+2$. But then 
$$
s_{n+1}\cdot\prod_{i\in L} s_i = s_{n+2}\cdot \prod_{i\in R} s_i
~,$$
which can be equivalently expressed as
$$
p/q\cdot\prod_{i\in L} s_i = 
q\cdot \prod_{i\in R} s_i
~.
$$
Of course, $L$ and $R$ partition $\set{1,\ldots,n}$, so $p=\prod_{i\in L} s_i
\cdot 
\prod_{i\in R} s_i
$. Thus
$
q^2 = 
\bra{\prod_{i\in L} s_i}^2
$, and since the numbers are positive, $q = 
\prod_{i\in L} s_i$. Therefore, there is a selection of the $s_i$'s whose product contains each prime from $p_1$ to $p_{3m}$ exactly once. This set $L$ defines a subcollection of $C$ such that every element of $U$ is contained in exactly one subset of the subcollection. So the instance of X3C is positive.
\end{proof}

\renewcommand{\thetheorem}{4.2}

\begin{lemma}
The PMSP Problem is NP-hard.
\end{lemma}

\begin{proof}
We give a polynomial time Turing reduction of MP to PMSP. Specifically, we show how to find an answer to a given instance of MP by inspecting the minimum value for an appropriately constructed instance of PMSP. The proof is based on the fact that for any $d>0$, the function $f(x) = e^x+e^{d-x}$ defined for $0\leq x\leq d$ is minimized exactly when $x=d/2$. The difficulty is to ensure that we compute using rational numbers only.

Pick any instance of the MP Problem and let $s_1,\ldots,s_n$ be the positive integer sizes, each at least $2$. We define an instance of the PMSP Problem by taking $r_i=h_i^2$, where $h_i=1/s_i$. Then $0<r_i<1$, as desired. 
We shall see that we can decide exactly when the instance of MP is positive, just by checking if the minimum for the instance of PMSP is equal to the value $2\prod_{i=1}^n h_i$.

Let us see when the product $\prod_{i\in P} r_i + \prod_{i\notin P} r_i$ achieves that value. Let $d=\sum_{i=1}^n\ln r_i$. Then
$$
\prod_{i\in P} r_i + \prod_{i\notin P} r_i
=
e^{\sum_{i\in P} \ln r_i} + e^{d-\sum_{i\in P} \ln r_i }
\geq 2e^{d/2}
$$
with equality only when $d/2 = \sum_{i\in P} \ln r_i$. The crucial observation is that $2e^{d/2}$ is just $2\prod_{i=1}^n h_i$, which is a rational number. Thus so we can invoke an oracle that solves PMSP on the instance that we have constructed, then inspect the minimum returned by the oracle, and determine whether or not there exists $P$ such that $d/2 = \sum_{i\in P} 2\ln h_i$. But the equation can be equivalently written as
$1/2 \sum_{i=1}^n \ln h_i = \sum_{i\in P} \ln h_i$, which is equivalent to $\sum_{i\in P} \ln h_i = \sum_{i\notin P} \ln h_i$. This in turn is equivalent to $\prod_{i\in P} h_i = \prod_{i\notin P} h_i$. But this is equivalent to 
$\prod_{i\in P} s_i = \prod_{i\notin P} s_i$. So 
the instance of MP is positive if, and only if, the minimum for the instance of PMSP is equal to 
$2\prod_{i=1}^n h_i$.
\end{proof}

\renewcommand{\thetheorem}{4.5}

\begin{lemma}
The FIRMSSU Problem is NP-complete.
\end{lemma}

\begin{proof}
Take any instance of FIRBCBS. We can consider the $\binom{2g}{2}$ by $\binom{2g}{2}$ matrix $A=(a_{i,j})$ that is the ``complement'' of adjacency-matrix of $\g$ i.e., $a_{i,j}=0$ if left node $i$ is linked to right node $j$, and $1$ if they are not linked. Its rows and columns can be (independently) rearranged so that the $\binom{g}{2}$ by $\binom{g}{2}$ submatrix in the upper left hand corner contains only zeros if, and only if, the instance is positive. 
Now we will pad $A$ by adding ``strips'' of certain widths (a sketch is in Figure~\ref{f.padding}). Pick the smallest $k$ such that $\binom{2g}{2} < k$. Notice that for such $k$, $\binom{g}{2}<k$. Let $a=k-\binom{g}{2}$ and $b=3k-\binom{2g}{2}-a-k$. This implies that $b>\binom{g}{2}\geq 1$. We then add $a$ rows and $a+k$ columns, all filled with zeros, then $k+b$ rows and $b$ columns, all filled with ones. 
The resulting matrix $A'$ has the property that its rows and columns can be (independently) rearranged so that the $k$ by $2k$ rectangle in the upper left hand corner contains only zeros if, and only if, the instance is positive. The matrix also has at least one row and one column containing only ones, because $b\geq 1$.

We define an instance of FIRMSSU such that each column of $A'$ is a characteristic vector of a subset. This clearly yields a desired instance and completes the reduction.
\end{proof}

\renewcommand{\thetheorem}{4.7}

\begin{theorem}
The ROPAS Problem cannot be approximated with factor lower than $5/4$, unless P=NP.
\end{theorem}

\begin{proof}
We show a gap creating reduction from FIRMSSU to ROPAS.
Take any instance of FIRMSSU with $n=3k$ nonempty subsets $S_1,\ldots,S_{3k}$ on $\set{1,\ldots,3k}$. 
We construct an instance of the ROPAS Problem as follows. 
There are $4k$ workers. For any task and any worker, the probability that the worker succeeds in executing the task is one. The dag has three levels (an example is in Figure~\ref{f.reduction}). There are $6k$ sources, denoted by $A$, partitioned into $3k$ groups, associated with numbers $1$ through $3k$, of two tasks each. There are $3k$ internal tasks, denoted by $B_a$, associated with numbers $1$ through $3k$. Any source from $A$ from group $i$ is linked to the internal task $i$ from $B_a$. In addition, there are $k$ extra internal tasks, denoted by $B_b$. Each source from $A$ is linked to every internal task from $B_b$. There are $3k$ sinks, denoted by $C_a$, associated with subsets $S_1$ through $S_{3k}$. Any internal task $i$ from $B_a$ is linked to sink $S_j$ if, and only if, $i$ is in the subset $S_j$. In addition, there are $3k$ extra sinks, denoted by $C_b$. Each internal task from $B_b$ is linked to every sink from $C_b$.
We shall see that there is a gap in the minimum expected time to completion for the instance of ROPAS, that differentiates the case when the instance of FIRMSSU is positive from the case when it is negative.

Suppose that the instance of FIRMSSU is positive i.e., there are $2k$ of the subsets whose union has cardinality at most $2k$. We will argue that the minimum expected time to completion is at most $4$. Let $U$ be the union of the $2k$ subsets---if $U$ has cardinality less than $2k$, add to $U$ arbitrary elements of $\set{1,\ldots,3k}$ so as to increase the cardinality of $U$ to $2k$. The set $U$ represents $2k$ internal tasks from $B_{a}$. 
We describe a regimen. In the first round, we execute the $4k$ sources from $A$ the children of which are the $2k$ internal tasks from $B_a$ associated with tasks of $U$. This makes the tasks associated with $U$ eligible. In the second round, we execute the $2k$ eligible tasks from $B_a$. Because the instance in positive, $2k$ sinks from $C_a$ become eligible. In addition, we execute the remaining $2k$ sources from $A$. Because now all sources are executed, the remaining $2k$ internal tasks become eligible. In the third round, we execute the $2k$ eligible sinks from $C_a$ and the $2k$ eligible internal tasks from $B_a$ and $B_b$. Because now all internal tasks are executed, any non-executed sink is eligible. In the fourth round, we executed the remaining $4k$ sinks.

Suppose that the instance of FIRMSSU is negative i.e., the union of any $2k$ of the subsets has cardinality $2k+1$ or more. Take an arbitrary regimen. We will argue that any execution must take at least five rounds. 
At the end of the first round, at most $2k$ internal tasks can be eligible, and these must be tasks from $B_a$. Indeed, each task from $B_b$ is connected to each of the $6k$ sources from $A$, while at most $4k$ of them have been executed; also any internal task from $B_a$ is connected to two distinct sources.
Therefore, in round two, the only internal tasks that can get executed are at most $2k$ tasks from $B_a$. Because the instance is negative, any $2k$ subsets have cardinality $2k+1$ or more, and so one would have to execute at least $2k+1$ tasks from $B_a$ to make $2k$ sinks from $C_a$ eligible. Thus at most $2k-1$ sinks from $C_a$ can become eligible at the end of the second round. But there are $6k$ sinks in total, so at least $4k+1$ of them are not eligible then.
In round three, some nodes get executed, but none of the $4k+1$ sinks. Even if these sinks are all eligible at the end of round three, then in round four at most $4k$ of them can get executed, leaving one unexecuted at the end of round four. 
So the regimen requires at least five rounds to complete computation.
\end{proof}

\FFF
\FFF
\FFF
\FFF

\begin{figure}[htb]
\centerline{\epsfig{file=reduction_inapprox.eps,width=6in}}
\caption{Example of reduction from FIRMSSU to ROPAS with unconstrained number of workers and dag width. The dag has three levels denoted by $A$, $B$ and $C$. Every worker can execute any task.
\label{f.reduction}}
\end{figure}

\begin{figure}[htb]
\centerline{\epsfig{file=padding.eps,width=3.2in}}
\caption{A padding pattern in the reduction from FIRBCBS to FIRMSSU; light grey areas correspond to ones, dark grey to zeros.
\label{f.padding}}
\end{figure}

*********************}


\begin{thebibliography}{99}
\FF

\remove{****************
\bibitem{AMSK04}
Ali, S., Maciejewski, A.A., Siegel, H.J., Kim, J.-K.:
Measuring the robustness of a resource allocation.
IEEE Transactions on Parallel and Distributed Systems.
Vol. 15(7) (2004) 630--641
**************}

\bibitem{Ann02}
Annis, J., Zhao, Y., Voeckler, J., Wilde, M., Kent, S.,
Foster, I.: Applying Chimera Virtual Data Concepts to
Cluster Finding in the Sloan Sky Survey. ACM/IEEE
Conference on Supercomputing (2002) 1--14

\bibitem{Ber+03}
Berriman, B., Bergou, A., Deelman, E., Good, J., Jacob, J., Katz, D., Kesselman, C., Laity, A., Singh, G., Su, M.-H., Williams, R.:
Montage: A Grid-Enabled Image Mosaic Service for the NVO.
Astronomical Data Analysis Software \& Systems (2003)

\remove{************
\bibitem{CW79}
Carter, J.L., Wegman, M.N.:
Universal classes of hash functions.
Journal of Computer and System Sciences, 
Vol. 18(2) (1979) 143--154

\bibitem{Dagman}
Condor Manual, Condor Team, University of Wisconsin-Madison
{\small \url{http://www.cs.wisc.edu/condor/manual/v6.7/2_11DAGMan_Applications.html}} (2004)
************}

\remove{
\bibitem{Cook74}
Cook, S.A.: 
An observation on time-storage tradeoff.
Journal of Computer and System Sciences, 
Vol.~9 (1974) 308--316
}

\bibitem{CMR07}
Cordasco, G., Malewicz, G., Rosenberg, A.:
Advances in IC-Scheduling Theory: Scheduling Expansive and Reductive Dags and Scheduling Dags via Duality. IEEE Transactions on Parallel and Distributed Systems (2007) to appear 

\bibitem{CLRS}
Cormen, T.H., Leiserson, C.E., Rivest, R.L., Stein, C.: Introduction to Algorithms (2nd Edition). MIT Press (2001)

\bibitem{CK04}
Crescenzi, P., Kann, V. (eds.):
A compendium of NP optimization problems.
{\small\url{http://www.nada.kth.se/~viggo/wwwcompendium/node173.html}} 

\remove{************
\bibitem{CHM97}
Czech, Z.J., Havas, G., Majewski, B.S.:
Perfect hashing, Fundamental Study.
Theoretical Computer Science, Vol.182 (1997) 1--143

\bibitem{DKMMRT94}
Dietzfelbinger, M., Karlin, A., Mehlhorn, K., Meyer auf der Heide, F., Rohnert, H., Tarjan, R.E.:
Dynamic Perfect Hashing: Upper and Lower Bounds.
SIAM Journal on Computing, Vol. 23(4) (1994) 738--761
************}

\bibitem{Inspiral}
Deelman, E., Kesselman, C., Mehta, G.,  Meshkat, L., Pearlman, L., Blackburn, K., Ehrens, P., Lazzarini, A., Williams, R., Koranda, S.: 
GriPhyN and LIGO, building a virtual data grid for gravitational wave scientists.
High Performance Distributed Computing (HPDC'02) (2002) 225--234. 

\bibitem{Dil50}
Dilworth, R.P.:
A decomposition theorem for partially ordered sets.
Annals of Mathematics, Vol. 51 (1950) 161--166

\remove{
\bibitem{EN86}
Elsayed, E.A., Nasr, N.Z.:
Heuristics for resource-constrained scheduling.
International Journal of Production Research, Vol. 24(2) (1986) 299--310
}

\remove{************
\bibitem{FM96}
Farach, M., Muthukrishnan, S.:
Perfect Hashing for Strings: Formalization and Algorithms.
Combinatorial Pattern Matching (1996) 130--140
************}

\bibitem{FAP98a}
Fernandez, A., Armacost, R., Pet-Edwards, J.:
Understanding Simulation Solutions to Resource constrained Project Scheduling Problems with Stochastic task Durations.
Engineering Management Journal, Vol. 10(4) (1998) 5--13

\bibitem{FAP98b}
Fernandez, A., Armacost, R.L., Pet-Edwards, J.:
A Model for the Resource Constrained Project Scheduling Problem with Stochastic Task Durations.
7th Industrial Engineering Research Conference Proceedings
(1998)

\bibitem{FosterK04}
Foster, I., Kesselman, C. [eds.]:
The Grid: Blueprint for a New Computing Infrastructure, 2nd
ed.
Morgan-Kaufmann, San Francisco, CA (2004)

\remove{************
\bibitem{FKS84}
Fredman, M.L., Koml\'{o}s, J., Szemer\'{e}di, E.:
Storing a sparse table with O(1) worst case access time.
Journal of the ACM, Vol. 31(3) (1984) 538--544
************}

\bibitem{GaoM04}
Gao, L., Malewicz, G.,: Toward Maximizing the Quality of Results of Dependent Tasks Computed Unreliably. Theory of Computing Systems (2007) available online

\bibitem{GJ79}
Garey, M.R., Johnson, D.S.: 
Computers and Intractability.  
Freeman, New York (1979) 

\remove{************
\bibitem{Graphviz}
Graphviz - Graph Visualization Software, AT\&T Research {\small\url{http://www.graphviz.org}} 
************}

\bibitem{GI99}
Goel, A., Indyk, P.:
Stochastic load balancing and related problems.
40th Annual Symposium on Foundations of Computer Science (FOCS) (1999) 579--586

\bibitem{HL01}
Hillier, F.S., Lieberman, G.J.:
Introduction to Operations Research, 8th ed.
McGraw-Hill (2004)

\bibitem{HL05}
Herroelen, W., Leus, R.:
Project scheduling under uncertainty: Survey and research potentials.
European Journal of Operational Research, Vol. 165(2) (2005) 289--306

\remove{
\bibitem{HongK81}
Hong, J.-W., Kung, H.T.:
I/O complexity: the red-blue pebble game.  
13th ACM Symposium on Theory of Computing (STOC) (1981) 326--333
}

\remove{
\bibitem{HopcroftPV77}
Hopcroft, J.E., Paul, W., Valiant, L.G.:
On time versus space.
Journal of the ACM, Vol.~24 (1977) 332--337
}

\bibitem{H+05}
Horn, J.V., Dobson, J., Woodward, J., Wilde, M., Zhao, Y., Voeckler, J., Foster, I.:
Grid-Based Computing and the Future of Neuroscience Computation.  Methods in Mind, MIT Press, (2005). 

\bibitem{IR90}
Ireland, K., Rosen, M.:
A Classical Introduction to Modern Number Theory, 2nd ed.
Springer-Verlag (1990)

\bibitem{Joh87}
Johnson, D.S.:
The NP-completeness column: An ongoing guide.
Journal of Algorithms, Vol. 8(3) (1987) 438--448


\remove{
\bibitem{KY96}
Kouvelis, P., Yu, G.:
Robust Discrete Optimization and its Applications.
Kluwer Academic Publishers (1996)
}

\bibitem{KRT00}
Kleinberg, J., Rabani, Y., Tardos, E.:
Allocating Bandwidth for Bursty Connections.
SIAM Journal on Computing, Vol. 30(1) (2000) 191--217

\remove{************
\bibitem{KnuVol3}
Knuth, D.E.:
The Art of Computer Programming, Volume 3, Second Edition.
Addison-Wesley (1998)
************}

\bibitem{LR07}
Lin, G., Rajaraman, R.:
Approximation Algorithms for Multiprocessor Scheduling under Uncertainty.
19th ACM Symposium on Parallelism in Algorithms and Architectures (2007) to appear

\bibitem{L07}
Lin, G.:
New paradigms and approximation algorithms for optimization under uncertainty.
Ph.D. Dissertation, College of Computer and Information Science, Northeastern University (2006)

\remove{
\bibitem{MCG82}
McGinnis, L.F.:
Project Scheduling with Resource Considerations.
Proceedings of the Annual Industrial Engineering Conference (1982) 463--466
}

\bibitem{Mal05b}
Malewicz, G.:
Implementation and Experiments with an Algorithm for Parallel Scheduling of Complex Dags under Uncertainty. Submitted for publication. Preliminary version 8th Workshop on Algorithm Engineering and Experiments (ALENEX'06) (2006) 66--74

\bibitem{MFRW07}
Malewicz, G., Foster, I., Rosenberg, A., Wilde, M.: A Tool for Prioritizing DAGMan Jobs and Its Evaluation. Journal of Grid Computing Vol. 5(2) (2007) 197--212

\bibitem{MalewiczR04}
Malewicz, G., Rosenberg, A.L.:
On batch-scheduling dags for
Internet-based computing. 
11th European Conference on Parallel Processing (Euro-Par) (2005) 262--271

\bibitem{MRY05}
Malewicz, G., Rosenberg, A., Yurkewych, M.: Toward a Theory for Scheduling Dags in Internet-Based Computing. IEEE Transactions on Computers, Vol. 55(6) (2006) 757--768 


\bibitem{MsftProj03}
Microsoft Project 2003
{\small\url{http://www.microsoft.com/office/project/default.asp}}

\bibitem{MT97}
Mori, M., Tseng, C.:
A Resource Constrained Project Scheduling Problem with Reattempt at Failure: A Heuristic Approach.
Journal of the Operations Research Society of Japan, Vol. 40(1) (1997) 33--44

\bibitem{NR01}
Narasimhan, M., Ramanujam, J.:
A fast approach to computing exact solutions to the resource-constrained scheduling problem. 
ACM Transactions on Design Automation of Electronic Systems,
Vol. 6(4) (2001) 490--500

\bibitem{Nor97}
Norris, J.R.:
Markov Chain.
Cambridge University Press (1997)

\bibitem{OU95}
\"{O}zdamar, L., Ulusoy, G.:
A survey on the resource-constrained project scheduling problem.
IIE Transactions, Vol. 27 (1995) 574--586

\remove{************
\bibitem{Pag99}
Pagh, R.:
Hash and Displace: Efficient Evaluation of Minimal Perfect Hash Functions.
Workshop on Algorithms and Data Structures (WADS) (1999) 49--54
************}


\remove{
\bibitem{PatersonH70}
Paterson, M.S., Hewitt, C.E.:
Comparative schematology.
Project MAC Conference on Concurrent Systems and Parallel Computation, ACM Press (1970) 119--127
}

\bibitem{Pee03}
Peeters, R.:
The maximum edge biclique problem is {NP}-complete.
Discrete Applied Mathematics, Vol. 131(3) (2003)  651--654

\bibitem{Rosenberg04}
Rosenberg, A.L.:
On scheduling mesh-structured computations for
Internet-based computing.
IEEE Transactions on Computers, Vol. 53 (2004) 1176--1186

\remove{
\bibitem{RosenbergS83}
Rosenberg, A.L., Sudborough, I.H.:
Bandwidth and pebbling.
Computing Vol.~31 (1983) 115--139
}

\bibitem{RosenbergY04}
Rosenberg, A.L., Yurkewych, M.:
Guidelines for scheduling some
common computation-dags for Internet-based computing.  
IEEE Transactions on Computers, Vol. 54(4) (2005)
428--438

\bibitem{Sar02}
Sarmenta, L.F.G.:
Sabotage-tolerance mechanisms for volunteer computing systems.
Future Generation Computer Systems, Vol. 18(4) (2002) 561--572

\remove{
\bibitem{STW98}
Sotskov, Y.N., Tanaev, V.S., Werner, F.:
On the Stability Radius of an Optimal Schedule: a Survey and Recent Developments.
Industrial Applications of Combinatorial Optimization, Kluwer Academic Publishers (1998) 72--108
}

\bibitem{SU01}
Skutella, M., Uetz, M.:
Scheduling precedence-constrained jobs with stochastic processing times on parallel machines.
12th ACM-SIAM Symposium on Discrete Algorithms (SODA) (2001) 589--590

\bibitem{TTL05}
Thain D., Tannenbaum, T., Livny, M.: 
Distributed computing in practice: the Condor experience. 
Concurrency and Computation: Practice and Experience, Vol. 17 (2005) 323--356

\bibitem{TMY95}
Tseng, C.C., Mori, M., Yajima, Y.:
A project scheduling model considering the success probability.
Proc. of the Association of Asian Pacific Operational Research Societies (APROS) (1994) 399--406

\bibitem{TN03}
Turnquist, M.A., Nozick, L.K.:
Allocating Time and Resources in Project Management Under Uncertainty.
36th Annual Hawaii International Conference on System Sciences (HICSS) (2003) 250c

\end{thebibliography}
\end{document}